\documentclass[a4paper]{article}

\usepackage[utf8]{inputenc}
\usepackage[english]{babel} 
\usepackage[T1]{fontenc}

\usepackage{natbib}
\setcitestyle{aysep={,}} 

\usepackage{ upgreek }
\usepackage{gensymb} 
\makeatletter
\newcommand{\ssymbol}[1]{^{\@fnsymbol{#1}}}
\makeatother

\usepackage{amsmath}
\usepackage{graphicx}
\usepackage[colorinlistoftodos]{todonotes}
\usepackage[colorlinks=true, allcolors=blue]{hyperref}

\usepackage{subcaption}
\usepackage{amsmath}
\usepackage{booktabs}
\usepackage{hyperref}

\usepackage{graphicx}%
\usepackage{multirow}%
\usepackage{amsmath,amssymb,amsfonts}%
\usepackage{amsthm}%
\usepackage{mathrsfs}%
\usepackage[title]{appendix}%
\usepackage{xcolor}%
\usepackage{textcomp}%
\usepackage{manyfoot}%
\usepackage{booktabs}%
\usepackage{algorithm}%
\usepackage{algorithmicx}%
\usepackage{algpseudocode}%
\usepackage{listings}%
\usepackage{url}
\usepackage{graphicx}
\usepackage{subcaption}
\usepackage{subcaption} % put this in the preamble
\usepackage{amsmath}
\usepackage{setspace}

\usepackage[a4paper,top=3cm,bottom=2cm,left=3cm,right=3cm,marginparwidth=1.75cm]{geometry}

\usepackage{authblk}

\DeclareMathOperator*{\argmin}{argmin}
\DeclareMathOperator*{\argmax}{argmax}

\newtheorem{theorem}{Theorem}%  meant for continuous numbers
\newtheorem{proposition}[theorem]{Proposition}% 
\newtheorem{remark}{Remark}%

\newtheorem{definition}{Definition}%

\newtheorem{lemma}{Lemma}
\newtheorem{corollary}{Corollary}

\DeclareUnicodeCharacter{2212}{\ensuremath{-}}

\title{Local depth-based classification of directional data}

\author[$\ssymbol{1}$]{Giuseppe Gismondi}
\author[$\ssymbol{2}$]{Rebecca Rivieccio}
\author[$\ssymbol{1}$]{Giuseppe Pandolfo}
\affil[$\ssymbol{1}$]{Dept. of Economics and Statistics, University of Naples Federico II, Napoli, Italy}
\affil[$\ssymbol{2}$]{Dept. of Physics, University of Naples Federico II, Napoli, Italy}
\date{}

\begin{document}

  \maketitle

\begin{abstract}
Directional data arise in many applications where observations are naturally represented as unit vectors or as observations on the surface of a unit hypersphere. In this context, statistical depth functions provide a center--outward ordering of the data. This work aims at proposing the use of a local notion of data depth function to be applied in the DD-plot (Depth vs. Depth plot) to classify directional data. The proposed method is investigated through an extensive simulation study and two real-data examples.
\end{abstract}

\section{Introduction}
\label{sec1}

Directional data analysis is a branch of statistics that is concerned with the exploration and modelling of data expressed as angles or unit vectors.
Such data lie on the surface of the unit hypersphere $S^{q-1}:= \{x \in R^q: \|x\|_2 = 1\} $ of $q-1$ dimensions, where $||x||_2:= \sqrt{\sum_{i=1}^{q}x_i^2}$ and $x= (x_1, ..., x_q)$ .
They naturally arise in a variety of real-world contexts where the vectors represent directions, rotations, or cyclic phenomena. Prominent applications can be found in geology, where the orientation of magnetic fields in rocks is studied, as well as in meteorology and psychology, in the analysis of wind directions or the perception of the spatial orientation. Further examples and theoretical developments are discussed by \cite{mardia1999directional}, which remains a fundamental reference in the field of directional statistics, later complemented by the more recent work of  \cite{ley2017modern}.
There are several other application domains for which the orientation of vectors in the space contains richer information than the magnitude, such as compositional data (i.e. when vectors consist of nonnegative components that sum up to one) such as the relative frequencies of words in a document \citep[see][]{pandolfo2021depth}.
As noted by \cite{stephens1982use}, applying a square-root transformation to each vector maps these compositions to directional data lying on the surface of a $(q-1)$-dimensional unit hypersphere.

When dealing with this type of data, several challenges arise due to the absence of a natural reference direction and the lack of a unique definition of orientation or sense of rotation.
Moreover, since directional data do not have a natural ordering, the development of suitable depth functions can be quite useful.
Indeed, depth functions provide a notion of centrality, allowing a center--outward ordering of locations on the manifold \citep{agostinelli2013nonparametric} by generalizing the univariate notions of median and rank to the multivariate setting. 

Several notions of depth for directional data have been proposed and employed as feature spaces for implementing supervised classification methods
\citep{pandolfo2021depth,dey2025classification}.
Traditional \textit{global} angular depth functions aim to describe the overall centrality of a point with respect to the entire data distribution, providing a single measure of how central or peripheral an observation is.
However, this approach is only reliable when dealing with unimodal and convexly distributed data.
In the case of multimodality or non-convex data structures, typically arising in mixture models or clustering problems, the global depths fail to provide a meaningful representation of centrality, since multiple local centers may exist \citep{paindaveine2013depth}.

To overcome this limitation, local depth functions have been proposed to provide a more refined assessment of centrality.
Specifically, they aim to evaluate the position of a point within a restricted neighborhood of the data. This way it is possible to capture the centrality at a specific scale of locality \citep{agostinelli2011local}.
Such approach allows for a more flexible characterization of complex data structures, making local depths particularly effective also for classification tasks.

Hence, the goal of this work is to define a local version of the cosine distance depth (CDD) proposed by \cite{pandolfo2018distance} to be exploited for directional data classification purposes. More specifically, we consider its application to the two-step procedure known as DD-plot (Depth vs. Depth plot) introduced by \cite{Liu99} and later used to perform classification by \cite{Li01062012} (i.e. the DD-classifier). Roughly speaking, for two given samples, the corresponding DD-plot represents the depth values of those sample points with respect to the two underlying distributions, and thus transforms the samples in any dimension to a simple two-dimensional scatter plot. Then, a curve that best separates the two samples in their DD-plot is applied, in the sense that the separation yields the smallest classification error in the DD space.

The remainder of the paper is organized as follows. Sections \ref{sec3} and \ref{classDD} briefly recall the concept of data depth for directional data and the classification via the Depth vs. Depth (DD)-plot. Section \ref{theory} introduces a notion local cosine distance depth then used to built a classifier introduced in Section \ref{sec:localDD}. Section \ref{simulation} presents an extensive simulation study to investigate the performances of the proposed method. Section \ref{applications} provides two real-data examples. Finally, some concluding remarks are offered in Section \ref{concl}.

\section{Data depths for directional data}
\label{sec3}

Statistical depth functions extend univariate ordering to higher dimensions. Particularly, they offer a center-outward ordering by providing a measure of how central a point is with respect to a certain distribution. The concept of data depth was first extended to the analysis of directional data by \cite{small87} and later by \cite{liu92}. Accordingly, directional depth functions measure the degree of centrality of a point with respect to a directional distribution, and they provide a center-outward ordering on circles or on hyperspheres. 

Within the literature we can find the angular tukey depth (ATD) and the angular simplicial depth (ASD), which represent the directional extensions of the Tukey’s halfspace depth \citep{tukey1975mathematics} and the simplicial depth originally introduced for data in $\mathbb{R}^{q}$, respectively. Such depths are also known as geometric depths because they are based on geometric stuctures (i.e. hemispheres and simplices). Because of that their main drawback is related to their high computational cost, which makes them unfeasible when $q > 3$.
For such computational issue, here we focus on the class of distance-based depth functions introduced by \cite{pandolfo2018distance}, which includes the arc distance depth (ADD) of \cite{liu92}, the chord distance depth (ChDD) and the cosine distance depth (CDD).
Such distance-based depths are computationally feasible even in high dimensions and are strictly positive everywhere on $S^{q-1}$, except for in the uninteresting case of a point mass distribution, whereas ASD and ATD may take zero values, which can cause issues in supervised classification. In addition, they do not produce ties in the sample case. The computational advantage of CDD stems from the fact that it requires only pairwise inner products $\langle x_i, x_j \rangle$, which can be computed efficiently even in high dimensions. This contrasts with geometric depths that require solving complex optimization problems over hemispheres or simplices.

One more notion of depth for directional data is the angular Mahalanobis depth, which was studied by \cite{ley2014new} and developed by using the concept of directional quantiles. However, its application is often limited by the necessary prior choice of a spherical locational functional.

Here we focus on the CDD because of its computational ease and its properties that are particularly useful in defining our proposal. In the following, we recall its definition.
\begin{definition}[Cosine Distance Depth]
\label{def:defcosdepth}
The cosine distance depth of a point $x \in S^{q-1}$ with respect to the distribution $F$ on $S^{q-1}$ is defined as:
\[
\mathrm{CDD}(x, F) := 2 - \mathbb{E}_F[d_{\cos}(x, W)],
\]
where $d_{\cos}(x, w) = 1 - \langle x, w \rangle$ is the cosine distance, and $W \sim F$.
\end{definition}

The sample version is obtained by replacing $F$ by its empirical counterpart $\hat{F}_{n}$ calculated from the sample $x_1, \ldots, x_n$. The CDD satisfies all the following properties:

\vspace{1em}
\noindent\textbf{P1. Rotation invariance:}
$\mathrm{CDD}(x,F) = \mathrm{CDD}(Ox,OF)$ for any $q \times q$ orthogonal matrix $O$.

\vspace{1em}
\noindent\textbf{P2. Maximality at center:} $\displaystyle \max_{x \in S^{q-1}} \mathrm{CDD}(x,F) = \mathrm{CDD}(x_0,F)$ for any $F$ with center at $x_0$.

\vspace{1em}
\noindent\textbf{P3. Monotonicity on rays from the deepest point:}
$\mathrm{CDD}(\cdot,\cdot)$ decreases along any geodesic path $t \mapsto x_t$ from the deepest point $x_0$ to its antipodal point $-x_0$.

\vspace{1em}
\noindent\textbf{P4. Minimality at the antipodal point to the center:}
 $\mathrm{CDD}(-x_0,F) = \displaystyle \inf_{x \in S^{q-1}} \mathrm{CDD}(x,F)$ 
\quad for any $F$ with center at $x_0$. \\

While CDD provides a robust global measure of centrality, it may not adequately capture local structure in complex data distributions. In the next section, we address this issue by introducing a local version of CDD that adapts to the scale of locality.

\section{Classification in the depth space}
\label{classDD}

After the first suggestion in \cite{liu92}, the use of data depth to perform supervised classification has been suggested and investigated by many authors. Two main approaches have been proposed in the literature: \textit{(i)} the maximum depth classifier and \textit{(ii)} the Depth vs. Depth (DD)-classifier.

The first simply assigns a point $x$ to the distribution (or group) with respect to it attains the highest depth value for any considered depth $\mathrm{D}(\cdot,\cdot)$, that is:
$$
\mathrm{D}(x, \hat{F}_i) > \mathrm{D}(x, \hat{F}_j) \quad i \neq j, \quad \Rightarrow \quad  \textrm{assign}~x~\textrm{to}~\hat{F}_i.
$$
where $\mathrm{D}(x, \hat{F_i})$ and $\mathrm{D}(x, \hat{F_j})$ are the empirical depths of $x$ w.r.t. the $i$-th and the $j$-th distribution, respectively.

The latter was proposed by \cite{Li01062012} and is a refinement of the maximum depth classifier. It is based on the DD-plot (Depth vs. Depth plot), introduced by \cite{Liu99} which is a two-dimensional scatterplot where each data point is represented with coordinates given by its depth evaluated with respect to two distributions. Then some classification rule $s(\cdot)$ is directly applied in the DD-space
\[
C_{s}(x) =
\begin{cases}
\mathrm{D}(x, \hat{F}_i) > s\bigl(\mathrm{D}(x, \hat{F}_j)\bigr) &\Rightarrow \text{assign } x \text{ to } \hat{F}_i,\\[4pt]
\mathrm{D}(x, \hat{F}_i) \leq s\bigl(\mathrm{D}(x, \hat{F}_j)\bigr) &\Rightarrow \text{assign } x \text{ to } \hat{F}_j.
\end{cases}
\]
where $s(\cdot)$ is a real increasing function. \cite{Li01062012} suggested to look for a polynomial separator that is chosen in order to minimize the empirical misclassification error rate on the training sample. 

Note that when $s(\cdot)$ is the identity function, the classification rule becomes the maximum depth classifier.

The maximum depth classifier is certainly quite intuitive and easy to implement. In addition, it can deal with classification problems involving a large number of groups. Conversely, the DD-classifier is more flexible, but it requires the degree of the polynomial function for which the misclassification rate is minimized to be searched for.

The procedure can be applied to any kind of data, providing that a corresponding depth function exists. For instance, DD-plot for functional data have been developed \citep{cuesta2017dd} and also to directional data by means of standard global depths \citep{pandolfo2021depth,demni2019cosine}.

Obviously, an important aspect to be considered regards the choice of the depth function. Indeed, from a classification  perspective, it must be noted that  the halfspace and simplicial depths assign zero depth value to all those points which does not belong to the convex hull of the support of distribution. This implies that sample points lying outside the convex hull of the training set have zero empirical depth, thus it is not possible to assign an observation to one of the competing groups. On the contrary, this does not occur by adopting distance-based depths since they are always positive for any data point in the sample space. 

While existing works have applied DD-classifiers with \textit{global} depth functions, this approach may not capture local structure in complex directional distributions. In the next sections, we introduce a \textit{local} version of the cosine distance depth and develop a corresponding DD-classifier that adapts to the scale of locality.

\section{Local Cosine Distance Depth}
\label{theory} 

To develop a notion of depth capable of describing local features and mode(s) in directional distributions, \cite{agostinelli2012depthSIS} proposed a local version of the ASD by constraining the size of the spherical simplices, while later on \cite{pandolfo2022gld} proposed a local extention of the distance-based depths by restricting a global distance-based depth measure by considering only the points within a certain distance to a given point. 

Drawing inspiration from the work of \cite{paindaveine2013depth}, we propose a local version of the cosine distance depth. This is derived by calculating the depth with respect to the empirical distribution associated with the sample obtained by adding the reflections of the original observations with respect to a given point $x$ to the original sample.

We provide the definition of this depth below and briefly discuss why this approach can be problematic or even fail altogether with other measures of angular depth.

To do that, we need first to define a symmetric reflector operator on the unit hypersphere. Given a point $x_i\in S^{q-1}$, another point $x_j\in{S}^{q-1}$  can be reflected symmetrically through $x_i$ as follows:
\[
R(x_j,x_i)=2x_i\langle x_i\,, x_j\rangle-x_j, 
\]
where $\langle\,\cdot ,\cdot\,\rangle$ is the scalar product between the two vectors. This operator has the following properties:
\begin{enumerate}
    \item $R(-x_i,x_i)=-x_i$,
    \vspace{1em}
    \item $R(x_j,x_i)=R(x_j,-x_i)$,
    \vspace{1em}
    \item Given a distance function $d(\cdot,\cdot)$ defined on the unit hypersphere:
    \[
    d(x_i,R(x_j,x_i))=d(x_i,x_j).
    \]
\end{enumerate}

Consider a sample $X$ on ${S}^{q-1}$ and any given point $x_i \in X$. Let $X_{-i} := X \setminus \{x_i\}$ denote the sample without $x_i$. The set of all such reflected points is denoted by $R_i := \{ R(x_j, x_i) : x_j \in X_{-i} \}$, and the augmented sample is $X^{R_i} := X \cup R_i$. 
%which is the set containing all original points in $X$ and all the reflected points of the dataset through $x_i$. 
Here we assume that $|X|=n$, so $|X_{-i}|=n-1$ and $|X^{R_i}|=2n-1$. $x_i$ should be the depth median of its own reflected region, however this is not always true. 

When considering CDD, it is possible to identify a condition that precisely defines when this occurs. 

%new version
\begin{proposition}
\label{equiv}
Given a point $x_i\in X\subseteq S^{q-1}$ and the reflected region $X^{R_i}$, we have:
\begin{align*}
    x_i&=\argmax_{x_j\in X}\mathrm{CDD}\!\left(x_j,X^{R_i}\right) 
    \quad \text{if} \quad 1+2\sum_{k \neq i}\langle x_i,x_k\rangle>0\\ 
    \argmax_{x_k\in X_{-i}}d_{\cos}(x_i,x_k)&=\argmax_{x_j\in X} \mathrm{CDD}\!\left(x_j,X^{R_i}\right) 
    \quad \text{if} \quad 1+2\sum_{k \neq i}\langle x_i,x_k\rangle<0\\
    X&=\argmax_{x_j\in X}\mathrm{CDD}\!\left(x_j,X^{R_i}\right)
    \quad \text{if} \quad 1+2\sum_{k \neq i}\langle x_i,x_k\rangle=0
\end{align*}
where $d_{\cos}(x,y) = 1 - \langle x, y \rangle$.

Hence, $x_i$ is either a depth median or antipodal to the depth median of the region.
\end{proposition}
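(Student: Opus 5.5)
Since the cosine distance is affine in the inner product, the empirical CDD over the augmented sample $X^{R_i}$ reduces to an affine function of the inner product with the sum of its points. The plan is to compute that sum explicitly and then read off the maximiser over $X$.

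\textbf{Step 1: reduce CDD to an inner product.} By Definition~\ref{def:defcosdepth}, with $\hat F$ the empirical distribution of $X^{R_i}$ (of size $2n-1$),
\[
\mathrm{CDD}\!\left(x,X^{R_i}\right)=2-\frac{1}{2n-1}\sum_{y\in X^{R_i}}\bigl(1-\langle x,y\rangle\bigr)=1+\frac{1}{2n-1}\langle x,S\rangle ,
\qquad S:=\sum_{y\in X^{R_i}} y .
\]
Maximising CDD over $x_j\in X$ is therefore the same as maximising $\langle x_j,S\rangle$ over $X$.

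\textbf{Step 2: compute $S$.} Each pair $x_k$, $R(x_k,x_i)$ with $k\neq i$ sums to $x_k+2x_i\langle x_i,x_k\rangle-x_k=2\langle x_i,x_k\rangle x_i$. The unreflected point $x_i$ contributes itself. Hence
\[
S=c\,x_i,\qquad c:=1+2\sum_{k\neq i}\langle x_i,x_k\rangle .
\]
The key point is that the reflection cancels every component orthogonal to $x_i$. The constant $c$ is exactly the quantity appearing in Proposition~\ref{equiv}.

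\textbf{Step 3: case analysis on the sign of $c$.} We now maximise $c\,\langle x_j,x_i\rangle$ over $x_j\in X$.
\begin{itemize}
\item If $c>0$, we maximise $\langle x_j,x_i\rangle\le 1$. Equality holds at $x_j=x_i$, so $x_i$ is an argmax.
\item If $c<0$, we minimise $\langle x_j,x_i\rangle$. Since $\langle x_i,x_i\rangle=1$ is the largest possible value, the minimum is attained over $X_{-i}$ (assuming $n\ge 2$). It is attained at the point maximising $d_{\cos}(x_i,x_k)=1-\langle x_i,x_k\rangle$, which gives the second line of the statement.
\item If $c=0$, then CDD is constant (equal to $1$) on all of $S^{q-1}$, so every point of $X$ is a maximiser.
\end{itemize}
In the first two cases the population maximiser over $S^{q-1}$ is $\pm x_i$ according to the sign of $c$. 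This gives the final remark that $x_i$ is either the depth median or antipodal to it.

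\textbf{Expected obstacle.} No step is hard; the only delicate point is interpreting ``$\argmax$'' under ties.
\begin{itemize}
\item If $c>0$ and $X$ contains a duplicate of $x_i$, or if $c<0$ and several points attain the maximal cosine distance, the argmax is a set. The equalities should then be read as membership or set equalities.
\item In the $c<0$ case, one must check that $x_i$ itself cannot be the minimiser unless all points coincide. But then $c=2n-1>0$, a contradiction, so this is settled by the sign of $c$.
\end{itemize}
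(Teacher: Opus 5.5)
Your proof is correct and is essentially the paper's argument: both reduce $\mathrm{CDD}(x_j,X^{R_i})$ to an affine function of $\langle x_j,x_i\rangle$ whose slope is proportional to $c=1+2\sum_{k\neq i}\langle x_i,x_k\rangle$, and then split on the sign of $c$. Your pairing identity $x_k+R(x_k,x_i)=2\langle x_i,x_k\rangle x_i$ is a more compact route to the paper's $f(x_j)=2n-1-\langle x_j,x_i\rangle(1+2A)$; it also uses the correct normalisation $1/(2n-1)$, treats ties and the $c<0$ edge case explicitly, and shows that the maximiser over all of $S^{q-1}$ is $x_i$ or $-x_i$ according to the sign of $c$, which is what the closing ``Hence'' sentence actually needs.
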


\begin{proof}
Let $X = \{x_1,\dots,x_n\} \subset S^{q-1}$. For a fixed $x_i$, define the reflected set through $x_i$:
\[
X^{R_i} = \big\{ x_k^{R_i} : x_k^{R_i} = 2x_i\langle x_i,x_k\rangle - x_k,\; k \neq i \big\}.
\]

The CDD of a point $x_j$ with respect to $(X, X^{R_i})$ is
\[
\mathrm{CDD}\!\left(x_j, X^{R_i}\right) =
2 - \frac{1}{2(n-1)}\!\left[
\sum_{k \ne j} d_{\cos}(x_j,x_k)
+ \sum_{k \ne i} d_{\cos}\!\left(x_j, x_k^{R_i}\right)
\right].
\]

Maximizing CDD is equivalent to minimizing
\[
f(x_j) = \sum_{k \ne j} d_{\cos}(x_j,x_k)
+ \sum_{k \ne i} d_{\cos}\!\left(x_j, x_k^{R_i}\right).
\]

%\noindent\textbf{1. Computing the two sums separately.}\\
For the first sum, since $d_{\cos}(x_j,x_j)=0$,
\[
\sum_{k \ne j} d_{\cos}(x_j,x_k)
= \sum_{k=1}^n [1 - \langle x_j, x_k \rangle]
= n - \langle x_j, S \rangle,
\]
where $S = \sum_{k=1}^n x_k$.

For the second sum, using
\[
\langle x_j, x_k^{R_i} \rangle
= 2\langle x_i, x_k\rangle \langle x_j, x_i\rangle - \langle x_j, x_k \rangle,
\]
we obtain
\[
d_{\cos}\!\left(x_j, x_k^{R_i}\right)
= 1 - 2\langle x_i, x_k\rangle \langle x_j, x_i\rangle + \langle x_j, x_k\rangle.
\]

Summing over $k \ne i$:
\[
\sum_{k \ne i} d_{\cos}\!\left(x_j, x_k^{R_i}\right)
= (n-1)
- 2\langle x_j, x_i\rangle \sum_{k \ne i} \langle x_i, x_k\rangle
+ \langle x_j, \sum_{k \ne i} x_k \rangle.
\]

%\noindent\textbf{2. Combining the sums.}\\
Thus
\[
f(x_j) = \big[ n - \langle x_j, S \rangle \big]
+ \big[ (n-1) - 2\langle x_j, x_i\rangle A + \langle x_j, \sum_{k \ne i} x_k \rangle \big],
\]
where $A = \sum_{k \ne i} \langle x_i, x_k\rangle$.

Since $S = x_i + \sum_{k \ne i} x_k$, we have
\[
-\langle x_j, S \rangle + \langle x_j, \sum_{k \ne i} x_k \rangle
= -\langle x_j, x_i\rangle.
\]

Therefore
\[
f(x_j) = 2n - 1 - \langle x_j, x_i\rangle - 2\langle x_j, x_i\rangle A.
\]

Let $v = \langle x_j, x_i\rangle$, and $g(v)=f(x_j)$. Then
\[
g(v) = 2n - 1 - v(1 + 2A).
\]

Since $g$ is linear in $v$, and $v \in [-1,1]$ for unit vectors, the minimizer depends on the sign of $C = 1 + 2A$:

\begin{itemize}
    \item If $C > 0$: $g$ is decreasing in $v$. \\
    Minimum occurs at the largest possible $v$ in $X$, which is $v = 1$ attained at $x_j = x_i$. \\
    Hence $x_i$ is the unique minimizer of $f$ $\Rightarrow$ $x_i$ maximizes CDD.

    \item If $C < 0$: $g$ is increasing in $v$. \\
    Minimum occurs at the smallest possible $v$ in $X$. \\
    The smallest $v$ is $-1$ if $-x_i \in X$; otherwise it is $\min_{x_j \in X} \langle x_j, x_i\rangle$, 
    which corresponds to $\max_{x_j \in X} d_{\cos}(x_i, x_j)$. \\
    Hence the maximizer of $d_{\cos}(x_i, \cdot)$ in $X_{-i}$ maximizes CDD.

    \item If $C = 0$: $g$ is constant w.r.t. $v$ $\Rightarrow$ all $x_j \in X$ give the same CDD $\Rightarrow$ 
    every point in $X$ is a maximizer.
\end{itemize}

In the case $C < 0$, the farthest point from $x_i$ in cosine distance is closest to the antipode $-x_i$. 
Thus, $x_i$ is either the depth median of $X^{R_i}$ (when $C > 0$), or its antipode (when $C < 0$).
\end{proof}

This result has an intuitive interpretation: when $x_i$ is centrally located relative to other points (positive $C$), it becomes the depth median of its reflected region. When $x_i$ is peripheral (negative $C$), its antipode becomes more central in the reflected region.
Note that it may happen that a point $x_i$ is not the depth median of its own reflected region. In particular, when a point $x_i$ is antipodal to the depth median of $X^{R_i}$, then the points should be reordered in an increasing order of depth. In such case, points with lower depth are more similar to $x_i$ according to property \textbf{P3}.

Now, we can define the depth-based neighbourhood of a certain point $x_i$ at level $\beta$ by using any angular depth measures $\mathrm{AD}(\cdot,\cdot)$.

\begin{definition}
\label{beta_neigh}
\textit{($\beta$-Depth based neighbourhood)}. The $\beta$-Depth neighbourhood of a point $x_i \in X \subseteq S^{q-1}$ noted $DN_i^{(\beta)}$, with $\beta\in(0,1]$, is defined as the set of the first $\beta(n-1) $  points in $X_{-i}$ reordered in the following way:

\begin{align*}
\mathrm{AD}\Big(x_j,X^{R_i}\Big)>\mathrm{AD}\Big(x_k,X^{R_i}\Big) > \ldots > \mathrm{AD}\Big(x_n,X^{R_i}\Big)\quad&\text{if}\quad x_i=\argmax_{x_j}\mathrm{AD}\Big(x_j,X^{R_i}\Big);\\
\mathrm{AD}\Big(x_j,X^{R_i}\Big) < \mathrm{AD}\Big(x_k,X^{R_i}\Big) < \ldots < \mathrm{AD}\Big(x_n,X^{R_i}\Big)\quad&\text{if}\quad x_i=\argmin_{x_j}\mathrm{AD}\Big(x_j,X^{R_i}\Big).  
\end{align*}
\end{definition}

\begin{proposition}
\label{prop2}
Consider a point $x_i\in X\subseteq S^{q-1}$ and the $\beta$-depth based neighbourhood $DN^{(\beta)}_i$, and let $CN_i$ denote the set of the first $\beta(n-1)$ points in $X_{-i}$ reordered by increasing cosine distance from $x_i$:
\[
d_{\cos}(x_j,x_i) < d_{\cos}(x_k,x_i) < \ldots < d_{\cos}(x_n,x_i),
\]
then if the CDD is adopted as the angular depth measure, we have
\[
DN^{(\beta)}_i = CN_i.
\]
\end{proposition}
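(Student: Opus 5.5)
The plan is to reuse the explicit formula from the proof of Proposition~\ref{equiv}. There, for any $x_j \in X$, the objective $f(x_j)$ reduces to
\[
f(x_j) = 2n-1 - \langle x_j, x_i\rangle\,(1+2A), \qquad A=\sum_{k\neq i}\langle x_i,x_k\rangle .
\]
Consequently
\[
\mathrm{CDD}(x_j,X^{R_i}) = 2 - \frac{f(x_j)}{2(n-1)}
\]
is an affine function of $v_j=\langle x_j,x_i\rangle$ with slope $C/(2(n-1))$, where $C=1+2A$. Since $d_{\cos}(x_j,x_i)=1-v_j$, the depth of $x_j$ in the reflected region is an affine function of the cosine distance from $x_i$, with slope $-C/(2(n-1))$. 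The proof then only needs to show that the ordering prescribed in Definition~\ref{beta_neigh} always reduces to ordering $X_{-i}$ by increasing $d_{\cos}(\cdot,x_i)$.

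The argument splits into the same three cases as Proposition~\ref{equiv}.

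\emph{Case $C>0$.} By Proposition~\ref{equiv}, $x_i$ is the argmax, so Definition~\ref{beta_neigh} orders $X_{-i}$ by decreasing CDD. Because CDD is strictly decreasing in $d_{\cos}(\cdot,x_i)$, decreasing depth is the same as increasing cosine distance. The first $\beta(n-1)$ points are therefore exactly $CN_i$.

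\emph{Case $C<0$.} Now CDD is strictly increasing in $d_{\cos}(\cdot,x_i)$. Since $x_i$ has $v=1$, it attains the minimum depth over $X$, which is the second branch of Definition~\ref{beta_neigh}. That branch orders by increasing depth, which again is increasing cosine distance, so the neighbourhood is again $CN_i$. This is the content of the remark after Proposition~\ref{equiv} (via \textbf{P3}), but here it follows directly from linearity and needs no appeal to \textbf{P3}.

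\emph{Case $C=0$.} This is the degenerate case and the one I expect to be the obstacle. All depths coincide, so both branches of Definition~\ref{beta_neigh} apply vacuously and the depth ordering does not determine the neighbourhood. I would handle it with a tie-breaking convention: either exclude $C=0$ as a measure-zero event for continuous distributions, or define the ordering in the tied case by the limiting linear relation, i.e.\ order by $v_j$. The statement would then hold under that convention.

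A smaller issue of the same kind arises when some $x_j\neq x_i$ has $v_j=1$ (duplicates) or when points tie in cosine distance. The orderings in both definitions use strict inequalities, so I would note that they are implicitly assumed strict. Under that assumption the affine map is a strictly monotone bijection between the two rankings, and the sets of the first $\beta(n-1)$ elements coincide.
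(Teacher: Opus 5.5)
Your proposal is correct and follows essentially the same route as the paper: both reuse the affine expression $f(x_j)=2n-1-\langle x_i,x_j\rangle(1+2A)$ from the proof of Proposition~\ref{equiv} and split into three cases on the sign of $C=1+2A$. Your handling of $C=0$ is more careful than the paper's, which simply asserts that any ordering yields the same set. That assertion is not true in general for $\beta<1$: all depths tie, so Definition~\ref{beta_neigh} does not single out a neighbourhood, and a tie-breaking convention or a measure-zero exclusion like the one you propose is genuinely needed.
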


\begin{proof}
From Proposition \ref{equiv}, we know there are three cases depending on $C = 1 + 2\sum_{k\neq i}\langle x_i,x_k\rangle$:\\

\noindent \textbf{Case 1: $C > 0$} \\
Here $x_i = \underset{x_k}{\argmax}~\mathrm{CDD}(x_k,X^{R_i})$, so by Definition \ref{beta_neigh}, points are ordered by decreasing CDD. We need to show:
\[
\mathrm{CDD}(x_j,X^{R_i}) > \mathrm{CDD}(x_l,X^{R_i}) \iff d_{\cos}(x_i,x_j) < d_{\cos}(x_i,x_l).
\]

From the proof of Proposition \ref{equiv}, we have:
\[
f(x_j) = 2n - 1 - \langle x_i, x_j\rangle(1 + 2A),
\]
where $A = \sum_{k\neq i} \langle x_i, x_k\rangle$. Since maximizing CDD is equivalent to minimizing $f$, we have:
\[
\mathrm{CDD}(x_j,X^{R_i}) > \mathrm{CDD}(x_l,X^{R_i}) \iff f(x_j) < f(x_l).
\]

Now,
\[
f(x_l) - f(x_j) = -[\langle x_i, x_l\rangle - \langle x_i, x_j\rangle](1 + 2A).
\]

Since $C = 1 + 2A > 0$ in this case:
\[
f(x_l) - f(x_j) > 0 \iff \langle x_i, x_l\rangle - \langle x_i, x_j\rangle < 0 \iff \langle x_i, x_j\rangle > \langle x_i, x_l\rangle.
\]

Converting to cosine distances:
\[
\langle x_i, x_j\rangle > \langle x_i, x_l\rangle \iff 1 - d_{\cos}(x_i,x_j) > 1 - d_{\cos}(x_i,x_l) \iff d_{\cos}(x_i,x_j) < d_{\cos}(x_i,x_l).
\]

Thus, in Case 1, ordering by decreasing CDD is equivalent to ordering by increasing cosine distance from $x_i$.\\

\noindent \textbf{Case 2: $C < 0$} \\
Here $x_i = \underset{x_k}{\argmin}~\mathrm{CDD}(x_k,X^{R_i})$, so by Definition \ref{beta_neigh}, points are ordered by increasing CDD. We need to show:
\[
\mathrm{CDD}(x_j,X^{R_i}) < \mathrm{CDD}(x_l,X^{R_i}) \iff d_{\cos}(x_i,x_j) < d_{\cos}(x_i,x_l).
\]

Since $f(x_j) = 2n-1 - \langle x_i, x_j\rangle(1+2A)$ and $C = 1+2A < 0$, $f$ is increasing in $v = \langle x_i, x_j\rangle$. Therefore:
\[
f(x_j) < f(x_l) \iff \langle x_i, x_j\rangle < \langle x_i, x_l\rangle.
\]

Recall that maximizing CDD is equivalent to minimizing $f$, so:
\[
\mathrm{CDD}(x_j,X^{R_i}) > \mathrm{CDD}(x_l,X^{R_i}) \iff f(x_j) < f(x_l) \iff \langle x_i, x_j\rangle < \langle x_i, x_l\rangle.
\]

Taking the contrapositive:
\[
\mathrm{CDD}(x_j,X^{R_i}) < \mathrm{CDD}(x_l,X^{R_i}) \iff \langle x_i, x_j\rangle > \langle x_i, x_l\rangle.
\]

Converting to cosine distances:
\[
\langle x_i, x_j\rangle > \langle x_i, x_l\rangle \iff 1 - d_{\cos}(x_i,x_j) > 1 - d_{\cos}(x_i,x_l) \iff d_{\cos}(x_i,x_j) < d_{\cos}(x_i,x_l).
\]

Thus, in Case 2, ordering by increasing CDD is also equivalent to ordering by increasing cosine distance from $x_i$.\\

\noindent \textbf{Case 3: $C = 0$} \\
Here all points have equal CDD, so any ordering yields the same set $DN^{(\beta)}_i$, which equals $CN_i$ trivially.

Therefore, in all cases, $DN^{(\beta)}_i = CN_i$.
\end{proof}

Hence, defining the $\beta$-depth neighbourhood of a point $x_i$ is equivalent to looking for the $\beta(n-1)$ nearest neighbours of $x_i$. This result was already encountered by \cite{paindaveine2013depth} for depths in $\mathbb{R}^{q}$ for $q=1$, while here it holds for $q \geq 1$ as long as the CDD is considered. This allows us to compute the reflected region and the depth of a point by looking for the nearest points. This way, the local cosine distance-depth can be defined as follows.

\begin{definition}\label{def:lcdd_sample}\textit{(Local cosine distance depth}). The local cosine distance depth of a point $x_i\in X\subseteq S^{q-1}$ at a locality level $\beta\in(0,1]$ is defined as follows:
\[
\mathrm{LCDD}^{(\beta)}(x_i,X)=\mathrm{CDD}\Big(x_i,DN_i^{(\beta)}\Big).
\]
\end{definition} 

When $\beta$ is set equal to one, the global cosine distance depth is obtained. 
It is worth noting that the chord or the arc distance depths do not guarantee that a point $x_i$ is a depth median or antipodal to the depth median of $X^{R_i}$. This makes reordering the points according to depth in a given region difficult, if possible at all. This occurs because chord and arc distances lack the linear structure that makes the minimization problem in Proposition~\ref{equiv} tractable. Specifically, the function analogous to $f(x_j)$ in the proof does not simplify to a linear function of $\langle x_i, x_j\rangle$ for these distances, making it impossible to derive a simple condition for when $x_i$ is the depth median of $X^{R_i}$.

Note that since cosine distance and neighbourhoods are rotation-invariant, so too are nearest neighbours and local cosine distance depth, i.e. $\mathrm{LCDD}^{(\beta)}(O x_i, O X) = \mathrm{LCDD}^{(\beta)}(x_i, X)$. LCDD is also bounded and strictly positive, i.e. $0 < \mathrm{LCDD}^{(\beta)}(x_i, X) \leq 2$ for a non-point mass $X$, since neighbourhoods are finite and the CDD is strictly positive on $S^{q−1}$.

LCDD does not satisfy the monotonicity property (\textbf{P3}). This is expected and desirable: it is designed to capture \emph{local} centrality, and may exhibit multiple local maxima in multimodal distributions. So, it is monotone in locality. As $\beta \to 1$, LCDD$^{(\beta)}(x) \to$ CDD$(x)$, recovering the global depth.

Figure \ref{fig:fig1} depicts the behaviour of the proposed local cosine distance compared with its global version in the case of a trimodal spherical distribution, showing their contour plots (for $\beta = 0.25$) alongside the density contours of the data. For the sake of illustration, the plots are two-dimensional and the data are reported in spherical coordinates. As one can see, the CDD fail to capture the three modes, identifying only a single global center. In contrast, the proposed LCDD (with $\beta=0.25$) clearly identifies all the three modes.

\begin{figure}[htbp]
\centering

% Top image (a)
\begin{subfigure}{0.51\textwidth}
\centering
\includegraphics[width=\textwidth]{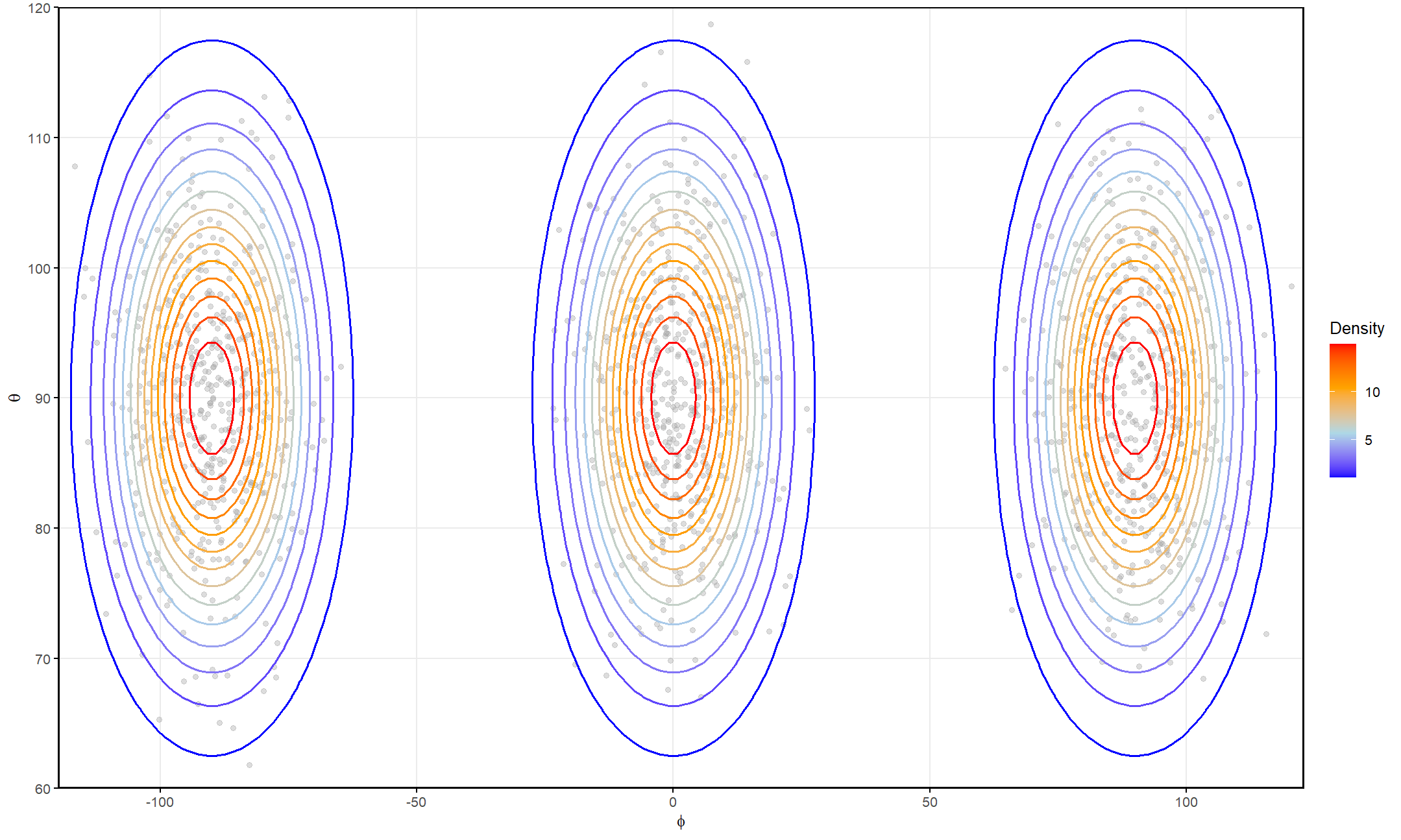}
\caption{}
\end{subfigure}

\vspace{0.3cm}

% Bottom left image (b)
\begin{subfigure}{0.48\textwidth}
\centering
\includegraphics[width=\textwidth]{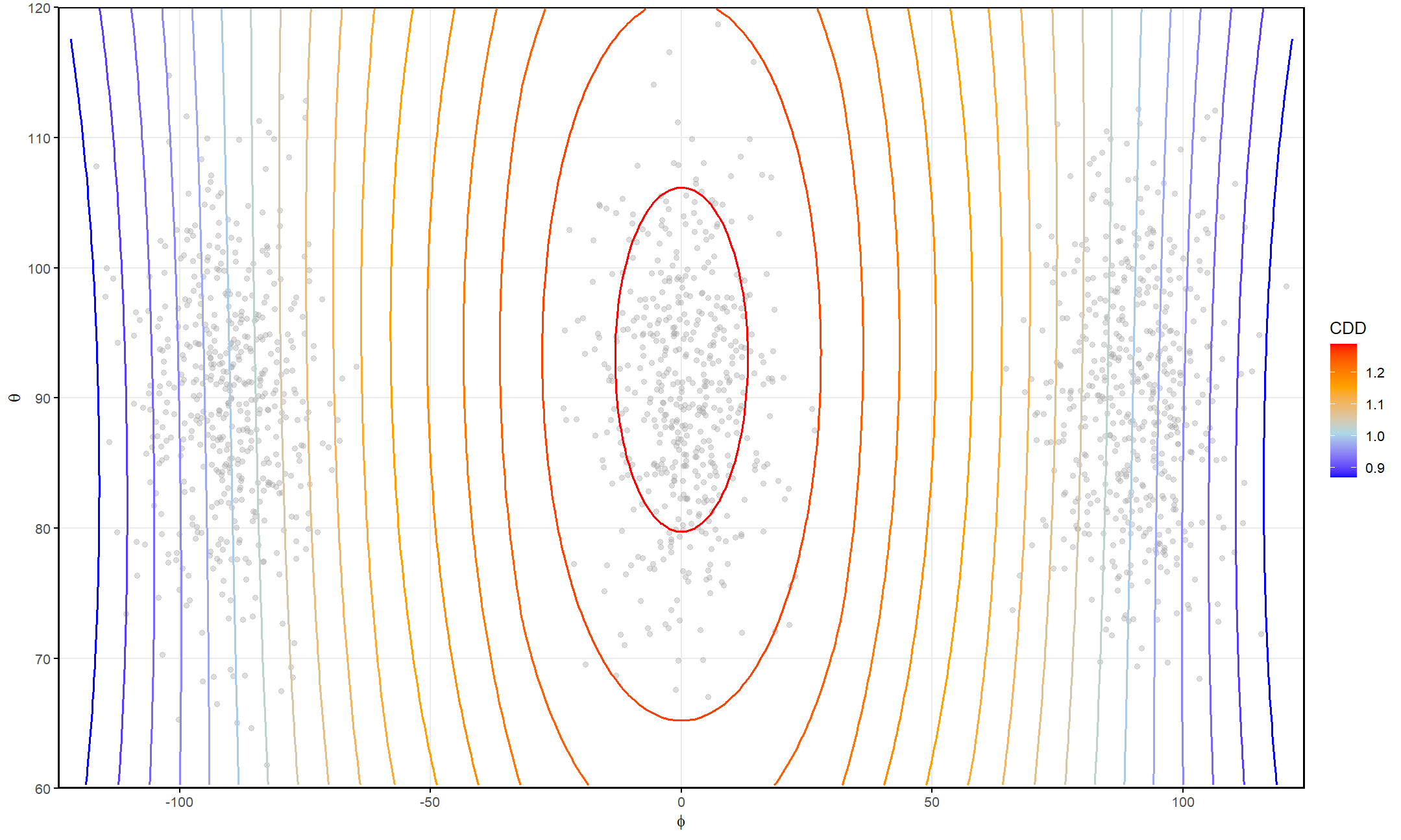}
\caption{}
\end{subfigure}
\hfill
% Bottom right image (c)
\begin{subfigure}{0.48\textwidth}
\centering
\includegraphics[width=\textwidth]{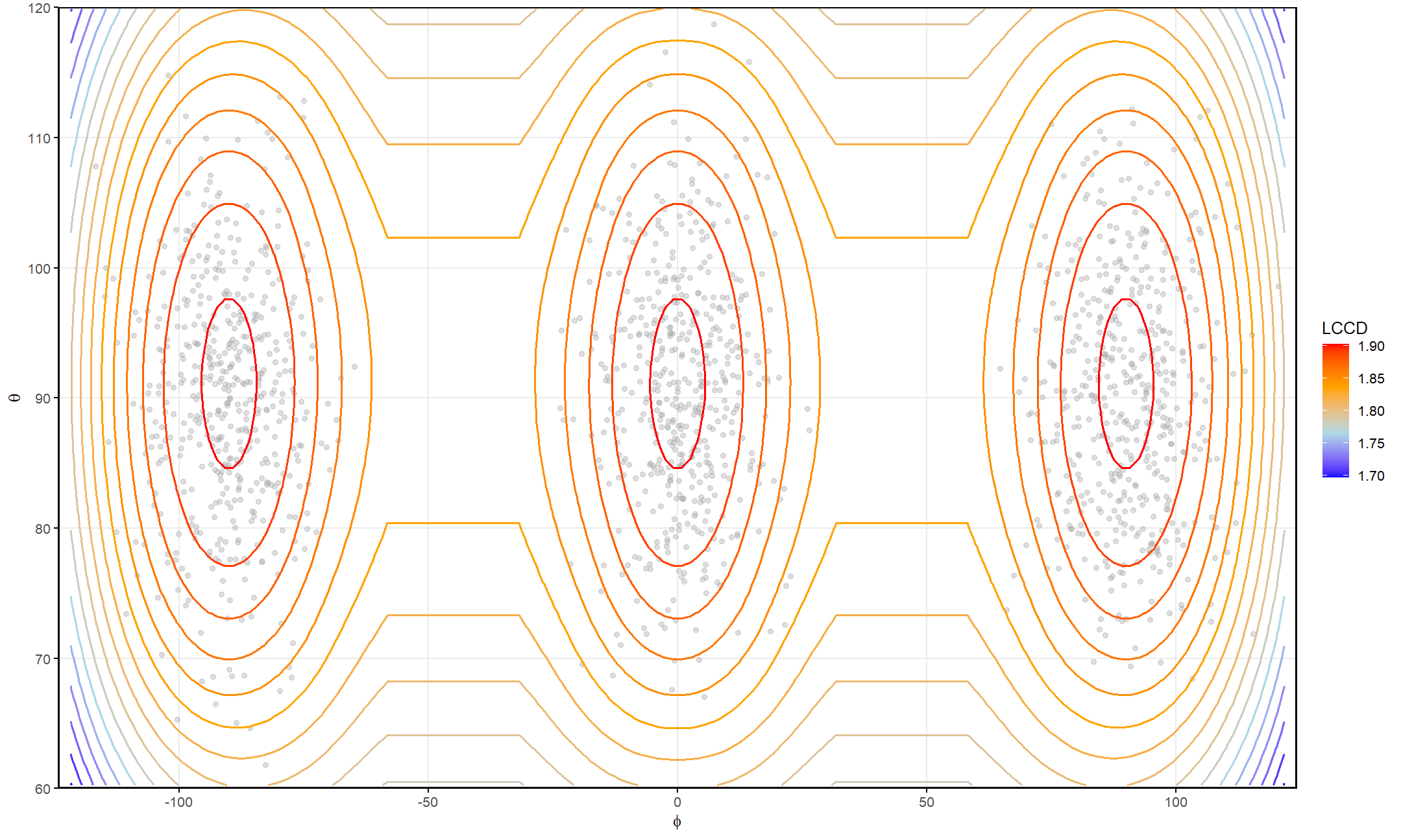}
\caption{}
\end{subfigure}

\caption{Density contour plot of a trimodal distribution on the sphere (a), along with the corresponding contour plots  CDD (b) and LCDD with $\beta=0.25$ (c).}
\label{fig:fig1}
\end{figure}

\begin{theorem}[Sample behaviour in $\beta$-neighborhoods]
\label{nesting}
For $0 < \beta_1 < \beta_2 \leq 1$,
\begin{enumerate}
\item $\mathrm{DN}^{(\beta_1)}_x \subseteq \mathrm{DN}^{(\beta_2)}_x$ \quad 
\item $\mathrm{LCDD}^{(\beta_1)}(x,X) \geq \mathrm{LCDD}^{(\beta_2)}(x,X)$ \quad 
\end{enumerate}
\end{theorem}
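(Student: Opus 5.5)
By Proposition~\ref{prop2}, both claims reduce to elementary facts about the sorted list of cosine distances from $x$ to the other sample points. Write $x=x_i$ and $m(\beta)=\lceil\beta(n-1)\rceil$ (any fixed rounding convention that is nondecreasing in $\beta$ works). Since CDD is the depth used, Proposition~\ref{prop2} gives $DN^{(\beta)}_i=CN_i$. So $DN^{(\beta)}_i$ consists of the first $m(\beta)$ points of $X_{-i}$ in a \emph{single} ordering
\[
d_{(1)}\le d_{(2)}\le\dots\le d_{(n-1)},\qquad d_{(r)}=d_{\cos}\big(x_i,x_{(r)}\big).
\]
This ordering does not depend on $\beta$. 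If there are ties, I would fix one tie-breaking rule once and for all, for example by index, so that the ordering is still $\beta$-free.

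\textbf{Part 1.} For $\beta_1<\beta_2$ we have $m(\beta_1)\le m(\beta_2)$, because the rounding is monotone. The first $m(\beta_1)$ elements of a fixed list are contained in its first $m(\beta_2)$ elements, so $DN^{(\beta_1)}_i\subseteq DN^{(\beta_2)}_i$. The only care needed is to stress that Proposition~\ref{prop2} removes any dependence of the ordering on $\beta$, including the case split on the sign of $C$.

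\textbf{Part 2.} By Definition~\ref{def:defcosdepth} applied to the empirical distribution of the neighbourhood,
\[
\mathrm{LCDD}^{(\beta)}(x_i,X)=2-\frac{1}{m(\beta)}\sum_{r=1}^{m(\beta)} d_{(r)} .
\]
It therefore suffices to show that the running mean $\mu_m=\frac1m\sum_{r\le m}d_{(r)}$ of a nondecreasing sequence is nondecreasing in $m$. For this I would use the identity
\[
\mu_{m+1}-\mu_m=\frac{d_{(m+1)}-\mu_m}{m+1}.
\]
Since $d_{(m+1)}\ge d_{(r)}$ for every $r\le m$, we get $d_{(m+1)}\ge\mu_m$, hence $\mu_{m+1}\ge\mu_m$. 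Iterating from $m(\beta_1)$ up to $m(\beta_2)$ gives $\mu_{m(\beta_2)}\ge\mu_{m(\beta_1)}$. This is exactly $\mathrm{LCDD}^{(\beta_1)}\ge\mathrm{LCDD}^{(\beta_2)}$.

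\textbf{Main obstacle.} The difficulty is bookkeeping rather than analysis. First, $\beta(n-1)$ is generally not an integer, so the rounding convention must be made explicit and shown to be monotone in $\beta$. Second, ties in cosine distance make the neighbourhood ill-defined unless a consistent tie-break is fixed, and that tie-break must be the same for all $\beta$ for Part 1 to hold. Once those two points are settled, both parts are immediate.
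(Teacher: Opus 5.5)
Your proof is correct and follows essentially the same route as the paper. Both reduce to the nearest-neighbour ordering via Proposition~\ref{prop2}, get nesting by taking prefixes of a fixed sorted list, and show that the running mean of the sorted cosine distances is nondecreasing. Your one-step identity $\mu_{m+1}-\mu_m=(d_{(m+1)}-\mu_m)/(m+1)$ is simply the incremental form of the paper's block bound $\Delta S\ge (k_{\beta_2}-k_{\beta_1})S_{\beta_1}/k_{\beta_1}$.

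Your bookkeeping is actually more careful than the paper's. The paper uses $k_\beta=\lfloor\beta(n-1)\rfloor$ and asserts $k_{\beta_1}<k_{\beta_2}$ strictly, which fails in general, and the floor can also equal $0$. Your nondecreasing ceiling convention handles the equal-index case and keeps $m(\beta)\ge 1$.
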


\begin{proof}
By Proposition~\ref{prop2}, we have $\mathrm{DN}^{(\beta)}_x = \mathrm{CN}^{(\beta)}_x$ for all $\beta$. 

Let $r_{(1)}(x) \leq r_{(2)}(x) \leq \cdots \leq r_{(n-1)}(x)$ denote the ordered cosine distances from $x$ to points in $X \setminus \{x\}$. 

Define $k_\beta = \lfloor \beta(n-1) \rfloor$, the integer part of $\beta(n-1)$. Since $\beta_1 < \beta_2$, we have $k_{\beta_1} < k_{\beta_2}$.\\

By definition,
\[
\mathrm{CN}^{(\beta_1)}_x = \{x_j \in X \setminus \{x\} : r_j(x) \leq r_{(k_{\beta_1})}(x)\}
\]
and
\[
\mathrm{CN}^{(\beta_2)}_x = \{x_j \in X \setminus \{x\} : r_j(x) \leq r_{(k_{\beta_2})}(x)\}.
\]
Since $r_{(k_{\beta_1})}(x) \leq r_{(k_{\beta_2})}(x)$, every point in $\mathrm{CN}^{(\beta_1)}_x$ also satisfies $r_j(x) \leq r_{(k_{\beta_2})}(x)$, hence belongs to $\mathrm{CN}^{(\beta_2)}_x$. Therefore,
\[
\mathrm{DN}^{(\beta_1)}_x = \mathrm{CN}^{(\beta_1)}_x \subseteq \mathrm{CN}^{(\beta_2)}_x = \mathrm{DN}^{(\beta_2)}_x.
\]

Recall the local cosine depth for a point $x$ at level $\beta$ is defined as
\[
\mathrm{LCDD}^{(\beta)}(x,X) = 2 - \frac{1}{k_\beta} \sum_{j=1}^{k_\beta} r_{(j)}(x).
\]
Let 
\[
S_{\beta_1} = \sum_{j=1}^{k_{\beta_1}} r_{(j)}(x) \quad \text{and} \quad S_{\beta_2} = \sum_{j=1}^{k_{\beta_2}} r_{(j)}(x).
\]

Since $k_{\beta_2} > k_{\beta_1}$, we can write $S_{\beta_2} = S_{\beta_1} + \Delta S$, where
\[
\Delta S = \sum_{j=k_{\beta_1}+1}^{k_{\beta_2}} r_{(j)}(x).
\]

Now consider the average distances:
\[
\frac{S_{\beta_1}}{k_{\beta_1}} = \frac{1}{k_{\beta_1}} \sum_{j=1}^{k_{\beta_1}} r_{(j)}(x)
\]
and
\[
\frac{S_{\beta_2}}{k_{\beta_2}} = \frac{1}{k_{\beta_2}} \left( S_{\beta_1} + \Delta S \right).
\]

Since $r_{(j)}(x)$ are non-decreasing, for any $j$ with $k_{\beta_1} < j \leq k_{\beta_2}$, we have $r_{(j)}(x) \geq r_{(k_{\beta_1})}(x)$. Moreover,
\[
r_{(k_{\beta_1})}(x) \geq \frac{S_{\beta_1}}{k_{\beta_1}},
\]
because the maximum of the first $k_{\beta_1}$ distances is at least their average. Therefore,
\[
\Delta S = \sum_{j=k_{\beta_1}+1}^{k_{\beta_2}} r_{(j)}(x) \geq (k_{\beta_2} - k_{\beta_1}) \cdot \frac{S_{\beta_1}}{k_{\beta_1}}.
\]

This inequality implies
\[
S_{\beta_1} + \Delta S \geq S_{\beta_1} + (k_{\beta_2} - k_{\beta_1}) \cdot \frac{S_{\beta_1}}{k_{\beta_1}}
= S_{\beta_1} \left( 1 + \frac{k_{\beta_2} - k_{\beta_1}}{k_{\beta_1}} \right)
= S_{\beta_1} \cdot \frac{k_{\beta_2}}{k_{\beta_1}}.
\]

Dividing by $k_{\beta_2}$ gives
\[
\frac{S_{\beta_2}}{k_{\beta_2}} \geq \frac{S_{\beta_1}}{k_{\beta_1}}.
\]

Finally, since $\mathrm{LCDD}^{(\beta)}(x,X) = 2 - \frac{S_\beta}{k_\beta}$, we obtain
\[
\mathrm{LCDD}^{(\beta_1)}(x,X) = 2 - \frac{S_{\beta_1}}{k_{\beta_1}} 
\geq 2 - \frac{S_{\beta_2}}{k_{\beta_2}} = \mathrm{LCDD}^{(\beta_2)}(x,X).
\]
\end{proof}

This theorem establishes two key properties: (i) neighborhoods expand as $\beta$ increases (nesting property), and (ii) local depth decreases as neighborhoods expand. This is intuitive: as we consider larger neighborhoods, the average distance to neighbors increases, reducing the measure of local centrality.

\subsection{Population version and properties}
\label{sec:population_lcdd}

While Definition~\ref{def:lcdd_sample} provides a computationally tractable measure for finite samples, 
theoretical analysis requires its population counterpart. 

\begin{definition}[Population Local Cosine Distance Depth]
\label{def:lcdd_population}
Let $F$ be a distribution on $S^{q-1}$ with continuous density $f$ 
that is bounded away from zero on its support. For $x \in S^{q-1}$ 
and $\beta \in (0,1]$, let $\rho^{(\beta)}(x)$ be the unique radius 
satisfying:
\[
F\left(\{y \in S^{q-1} : d_{\cos}(x,y) \leq \rho^{(\beta)}(x)\}\right) = \beta
\]

The \emph{population local cosine distance depth} of $x$ with respect 
to $F$ at locality level $\beta$ is defined as:
\[
\text{LCDD}^{(\beta)}(x, F) := 2 - \frac{1}{\beta} \bigg( E_{Y \sim F}
\left[d_{\cos}(x,Y) \cdot I\{d_{\cos}(x,Y) \leq \rho^{(\beta)}(x)\}\right]\bigg)
\]
\end{definition}

\begin{remark}
Equivalently, $\text{LCDD}^{(\beta)}(x, F) = \text{CDD}(x, F_{\beta,x})$, 
where $F_{\beta,x}$ denotes the conditional distribution of $F$ 
restricted to the geodesic ball of probability mass $\beta$ around $x$.
%radius $r^{(\beta)}(x)$ around $x$. 
This makes explicit the connection between local and global depth measures.
\end{remark}

Monotonicity in $\beta$ holds at the population level, as the conditional expectation of $d_{\cos}$ is nondecreasing in the radius $r^{(\beta)}(x)$.

\begin{proposition}[Continuity with respect to $\beta$]\label{prop:continuity}
Let $F$ be a distribution on $S^{q-1}$ with continuous density $f$ bounded away from zero on its support. For $\beta\in(0,1)$ and $\Delta\beta>0$ with $\beta+\Delta\beta\le1$,
\[
\sup_{x\in S^{q-1}} \big|\mathrm{LCDD}^{(\beta)}(x,F) - \mathrm{LCDD}^{(\beta+\Delta\beta)}(x,F)\big| = O(\Delta\beta).
\]
\end{proposition}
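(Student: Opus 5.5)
Write $G_x(r):=F(\{y: d_{\cos}(x,y)\le r\})$ and
\[
M_x(\beta):=E_{Y\sim F}\!\left[d_{\cos}(x,Y)\,I\{d_{\cos}(x,Y)\le \rho^{(\beta)}(x)\}\right],
\]
so that $\mathrm{LCDD}^{(\beta)}(x,F)=2-M_x(\beta)/\beta$. The plan is to show that $\beta\mapsto M_x(\beta)/\beta$ is Lipschitz on $[\beta,1]$, with a constant that does not depend on $x$.

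\textbf{Step 1: $M_x$ is Lipschitz in $\beta$.} Since $F$ has a density, $G_x$ is continuous. Because $\rho^{(\beta)}(x)$ is unique, $G_x$ is strictly increasing on the relevant range, and $G_x(\rho^{(\beta)}(x))=\beta$. For $\beta'=\beta+\Delta\beta$, the difference $M_x(\beta')-M_x(\beta)$ equals the integral of $d_{\cos}(x,Y)$ over the annulus $\{\rho^{(\beta)}(x)<d_{\cos}(x,Y)\le\rho^{(\beta')}(x)\}$. This annulus has $F$-mass exactly $\Delta\beta$, and on it $0\le d_{\cos}\le 2$. Hence
\[
0\le M_x(\beta')-M_x(\beta)\le 2\Delta\beta
\]
uniformly in $x$. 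More precisely, the increment lies between $\rho^{(\beta)}(x)\Delta\beta$ and $\rho^{(\beta')}(x)\Delta\beta$. This step needs no regularity of the radius function at all, which is the key simplification.

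\textbf{Step 2: the quotient.} Write
\[
\frac{M_x(\beta')}{\beta'}-\frac{M_x(\beta)}{\beta}=\frac{M_x(\beta')-M_x(\beta)}{\beta'}-M_x(\beta)\,\frac{\Delta\beta}{\beta\beta'}.
\]
Since $0\le M_x(\beta)\le 2\beta$, the second term is bounded in absolute value by $2\Delta\beta/\beta'$. By Step 1 the first term is also at most $2\Delta\beta/\beta'$. Combining the two,
\[
\bigl|\mathrm{LCDD}^{(\beta)}(x,F)-\mathrm{LCDD}^{(\beta+\Delta\beta)}(x,F)\bigr|\le \frac{4\Delta\beta}{\beta+\Delta\beta}\le \frac{4}{\beta}\,\Delta\beta .
\]
Taking the supremum over $x$ gives the claim, with constant $4/\beta$ for fixed $\beta>0$. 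A sharper bound follows from the sandwich in Step 1: the two terms partially cancel, giving $|\cdot|\le 2\Delta\beta/\beta$.

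\textbf{Main obstacle: well-posedness of $\rho^{(\beta)}(x)$ for every $x$.} This is needed in the statement's sup over $x\in S^{q-1}$, including points $x$ outside the support, for which small cosine balls may carry zero mass. Continuity of $G_x$ is immediate from absolute continuity of $F$, so a radius with mass exactly $\beta$ exists. Uniqueness is assumed in Definition~\ref{def:lcdd_population}. If uniqueness fails, $M_x(\beta)$ is unchanged by the choice of radius, because the ambiguous shell has zero mass, so the argument goes through anyway. I would add a remark to this effect. I would also stress that only the mass identity $F(\text{annulus})=\Delta\beta$ is used, not any differentiability of $\rho^{(\beta)}$. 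This is why the lower bound on the density enters only to guarantee that $\rho^{(\beta)}(x)$ is well defined.

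Finally, $\beta$ must stay bounded away from $0$: the constant blows up as $\beta\to0$. This is consistent with the statement, which fixes $\beta\in(0,1)$.
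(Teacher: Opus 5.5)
Your proof is correct and is essentially the paper's argument: your Step~2 identity is exactly the paper's decomposition $\mu_{\beta+\Delta\beta}(x)-\mu_\beta(x)=\frac{\Delta\beta}{\beta+\Delta\beta}\bigl(\mu_A(x)-\mu_\beta(x)\bigr)$ over the annulus $A_x$ of $F$-mass $\Delta\beta$, written for the unnormalized quantity $M_x(\beta)=\beta\,\mu_\beta(x)$. The only difference is that you close the bound using just $0\le d_{\cos}\le 2$, whereas the paper detours through an asserted, unproved uniform Lipschitz property of $\beta\mapsto\rho^{(\beta)}(x)$ and a constant $M$; this detour is unnecessary, since the paper itself already notes $|\mu_A(x)-\mu_\beta(x)|\le 2$, so your version is the cleaner one.
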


\begin{proof}
For fixed $x \in S^{q-1}$, let $\rho^{(\beta)}(x)$ be the unique radius satisfying
\[
F\left(\{y \in S^{q-1} : d_{\cos}(x,y) \leq \rho^{(\beta)}(x)\}\right) = \beta.
\]
Since $f$ is continuous and bounded away from zero, the quantile function $\beta \mapsto \rho^{(\beta)}(x)$ is Lipschitz continuous in $\beta$, uniformly in $x$.

Define the conditional expectations:
\[
\mu_\beta(x) = E[d_{\cos}(x,Y) \mid d_{\cos}(x,Y) \leq \rho^{(\beta)}(x)]
\]
and
\[
\mu_{\beta+\Delta\beta}(x) = E[d_{\cos}(x,Y) \mid d_{\cos}(x,Y) \leq \rho^{(\beta+\Delta\beta)}(x)].
\]

Let $A_x = \{y : \rho^{(\beta)}(x) < d_{\cos}(x,y) \leq \rho^{(\beta+\Delta\beta)}(x)\}$, so that $F(A_x) = \Delta\beta$. Then we can write:
\[
\mu_{\beta+\Delta\beta}(x) = \frac{\beta}{\beta+\Delta\beta} \mu_\beta(x) + \frac{\Delta\beta}{\beta+\Delta\beta} \mu_A(x),
\]
where $\mu_A(x) = E[d_{\cos}(x,Y) \mid Y \in A_x]$.

Rearranging gives:
\[
\mu_{\beta+\Delta\beta}(x) - \mu_\beta(x) = \frac{\Delta\beta}{\beta+\Delta\beta} (\mu_A(x) - \mu_\beta(x)).
\]

Now, since $\rho^{(\beta)}(x) \leq d_{\cos}(x,y) \leq \rho^{(\beta+\Delta\beta)}(x)$ for $y \in A_x$, we have:
\[
\mu_A(x) \in [\rho^{(\beta)}(x), \rho^{(\beta+\Delta\beta)}(x)].
\]
Also, by definition, $\mu_\beta(x) \leq \rho^{(\beta)}(x)$. Therefore,
\[
|\mu_A(x) - \mu_\beta(x)| \leq \rho^{(\beta+\Delta\beta)}(x) - \mu_\beta(x) \leq \rho^{(\beta+\Delta\beta)}(x) - 0 \leq 2,
\]
since $d_{\cos} \in [0,2]$. However, we can obtain a tighter bound using the continuity of $\rho^{(\beta)}(x)$.

From the Lipschitz continuity of $\rho^{(\beta)}(x)$ in $\beta$, there exists $L > 0$ such that for all $x$:
\[
|\rho^{(\beta+\Delta\beta)}(x) - \rho^{(\beta)}(x)| \leq L \Delta\beta.
\]
Thus,
\begin{align*}
|\mu_A(x) - \mu_\beta(x)| &\leq |\rho^{(\beta+\Delta\beta)}(x) - \mu_\beta(x)| 
\leq |\rho^{(\beta+\Delta\beta)}(x) - \rho^{(\beta)}(x)| + |\rho^{(\beta)}(x) - \mu_\beta(x)| \\
&\leq L\Delta\beta + |\rho^{(\beta)}(x) - \mu_\beta(x)|.
\end{align*}

The term $|\rho^{(\beta)}(x) - \mu_\beta(x)|$ is bounded by a constant $M$ uniformly in $x$ because $f$ is bounded away from zero, ensuring the conditional distribution is not too concentrated at the boundary. Hence,
\[
|\mu_A(x) - \mu_\beta(x)| \leq L\Delta\beta + M.
\]

Returning to the difference:
\[
|\mu_{\beta+\Delta\beta}(x) - \mu_\beta(x)| = \frac{\Delta\beta}{\beta+\Delta\beta} |\mu_A(x) - \mu_\beta(x)| 
\leq \frac{\Delta\beta}{\beta} (L\Delta\beta + M) = O(\Delta\beta).
\]

Since $\mathrm{LCDD}^{(\beta)}(x,F) = 2 - \mu_\beta(x)$, we have:
\[
|\mathrm{LCDD}^{(\beta)}(x,F) - \mathrm{LCDD}^{(\beta+\Delta\beta)}(x,F)| = |\mu_{\beta+\Delta\beta}(x) - \mu_\beta(x)| = O(\Delta\beta).
\]

The bound holds uniformly in $x$ because all constants ($L$, $M$, and the implicit constant in $O(\Delta\beta)$) are independent of $x$ due to the compactness of $S^{q-1}$ and the uniform bounds on $f$.
\end{proof}

\begin{corollary}[Limit behavior]\label{cor:limits}
Let $F$ be a distribution on $S^{q-1}$ with continuous density $f$ bounded away from zero on its support. Then for all $x\in S^{q-1}$,
\begin{align*}
\lim_{\beta\to0^+}~\mathrm{LCDD}^{(\beta)}(x,F) &= 2,\\
\lim_{\beta\to1^-}~\mathrm{LCDD}^{(\beta)}(x,F) &= \mathrm{CDD}(x,F),
\end{align*}
where $\mathrm{CDD}(x,F) = 2 - E_F[d_{\cos}(x,Y)]$ is the population cosine distance depth.
\end{corollary}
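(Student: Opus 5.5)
We treat the two limits separately, working throughout with the conditional-mean form $\mathrm{LCDD}^{(\beta)}(x,F)=2-\mu_\beta(x)$ from the Remark and the proof of Proposition~\ref{prop:continuity}. Here
\[
\mu_\beta(x)=E\big[d_{\cos}(x,Y)\mid d_{\cos}(x,Y)\le\rho^{(\beta)}(x)\big].
\]
Both limits then reduce to statements about this conditional mean.

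\textbf{The limit $\beta\to0^+$.} Since $0\le d_{\cos}(x,Y)\le\rho^{(\beta)}(x)$ on the conditioning event, we have $0\le\mu_\beta(x)\le\rho^{(\beta)}(x)$. It therefore suffices to show $\rho^{(\beta)}(x)\to0$ as $\beta\to0^+$.

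The map $\rho\mapsto G_x(\rho):=F(\{y:d_{\cos}(x,y)\le\rho\})$ is nondecreasing and continuous, because $F$ has a density and level sets of $d_{\cos}(x,\cdot)$ are null. At $\rho=0$ the cap reduces to $\{x\}$, which has measure zero. If $x$ lies in the support, $f$ is bounded below near $x$, so $G_x(\rho)>0$ for every $\rho>0$. Hence, for any $\varepsilon>0$, $\beta<G_x(\varepsilon)$ forces $\rho^{(\beta)}(x)<\varepsilon$, which gives $\mu_\beta(x)\to0$ and the first limit equals $2$.

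The main obstacle is the case where $x$ lies outside the support. There $G_x(\rho)=0$ for all $\rho$ up to $\rho_0(x)=\min_{y\in\mathrm{supp}F}d_{\cos}(x,y)$, and the limit would be $2-\rho_0(x)$ instead of $2$. I would handle this by reading the hypothesis as requiring $f$ to be positive on all of $S^{q-1}$. This reading is already implicit in the assumed uniqueness of $\rho^{(\beta)}(x)$ for every $x$, and the statement is to be understood under it.

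\textbf{The limit $\beta\to1^-$.} Write the definition as
\[
\mathrm{LCDD}^{(\beta)}(x,F)=2-\beta^{-1}E\big[d_{\cos}(x,Y)\,I\{d_{\cos}(x,Y)\le\rho^{(\beta)}(x)\}\big].
\]
As $\beta\uparrow1$, the radius $\rho^{(\beta)}(x)$ increases to the value at which $G_x=1$. The indicator then converges to $1$ $F$-almost surely, and the integrand is bounded by $2$. By dominated convergence the expectation tends to $E_F[d_{\cos}(x,Y)]$, and since $\beta^{-1}\to1$, the limit is $\mathrm{CDD}(x,F)$.

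A more direct bound is also available. The difference between the expectation and $E_F[d_{\cos}(x,Y)]$ is an integral over a set of mass $1-\beta$, so it is at most $2(1-\beta)$. This yields
\[
|\mathrm{LCDD}^{(\beta)}(x,F)-\mathrm{CDD}(x,F)|\le \frac{2(1-\beta)}{\beta}+\frac{(1-\beta)\,E_F[d_{\cos}(x,Y)]}{\beta}.
\]
This is $O(1-\beta)$, consistent with Proposition~\ref{prop:continuity} applied with $\beta+\Delta\beta=1$.
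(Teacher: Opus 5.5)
Your proof is correct and follows essentially the same route as the paper. For $\beta\to0^+$ you bound the conditional mean by the radius $\rho^{(\beta)}(x)\to0$, and for $\beta\to1^-$ you pass to the limit in the truncated expectation by dominated convergence. Your caveat that the first limit needs $x$ in the support of $F$ (otherwise it is $2-\rho_0(x)$) and your explicit $O(1-\beta)$ bound are refinements the paper's proof leaves implicit; you also avoid the paper's claim $\rho^{(\beta)}(x)=O(\beta)$, which is unnecessary and fails for $q>3$.
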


\begin{proof}
Recall that $\mathrm{LCDD}^{(\beta)}(x,F) = 2 - \mu_\beta(x)$, where
\[
\mu_\beta(x) = E[d_{\cos}(x,Y) \mid d_{\cos}(x,Y) \leq \rho^{(\beta)}(x)]
\]
and $\rho^{(\beta)}(x)$ satisfies $F\{d_{\cos}(x,Y) \leq \rho^{(\beta)}(x)\} = \beta$.

\textbf{(i)} As $\beta \to 0^+$, the radius $\rho^{(\beta)}(x) \to 0$ because the density $f$ is bounded away from zero. More precisely, since $f$ is continuous and positive, for small $\beta$ we have $\rho^{(\beta)}(x) = O(\beta)$.

For any $Y$ with $d_{\cos}(x,Y) \leq \rho^{(\beta)}(x)$, we have $0 \leq d_{\cos}(x,Y) \leq \rho^{(\beta)}(x) \to 0$. By dominated convergence (since $d_{\cos} \leq 2$), we obtain:
\[
\lim_{\beta\to0^+} \mu_\beta(x) = \lim_{\beta\to0^+} E[d_{\cos}(x,Y) \mid d_{\cos}(x,Y) \leq \rho^{(\beta)}(x)] = 0.
\]
Therefore,
\[
\lim_{\beta\to0^+} \mathrm{LCDD}^{(\beta)}(x,F) = 2 - 0 = 2.
\]

\textbf{(ii)} As $\beta \to 1^-$, the radius $\rho^{(\beta)}(x)$ increases toward 
\[
\rho_{\max}(x) = \inf\{t \geq 0 : F\{d_{\cos}(x,Y) \leq t\} = 1\},
\]
which is the smallest radius such that the ball of that radius around $x$ contains the entire support of $F$.

The conditional distribution given $d_{\cos}(x,Y) \leq \rho^{(\beta)}(x)$ converges weakly to the unconditional distribution $F$ as $\beta \to 1^-$. By the bounded convergence theorem (since $d_{\cos}$ is bounded),
\[
\lim_{\beta\to1^-} \mu_\beta(x) = E_F[d_{\cos}(x,Y)].
\]
Therefore,
\[
\lim_{\beta\to1^-} \mathrm{LCDD}^{(\beta)}(x,F) = 2 - E_F[d_{\cos}(x,Y)] = \mathrm{CDD}(x,F),
\]
where $\mathrm{CDD}(x,F)$ is the population version of the cosine distance depth defined as the expected value of $2 - d_{\cos}(x,Y)$.
\end{proof}

These limits have intuitive interpretations: as $\beta \to 0^+$, the neighborhood shrinks to a point, so the average distance goes to 0 and LCDD approaches its maximum value 2. As $\beta \to 1^-$, the neighborhood expands to cover the entire distribution, recovering the global CDD.

\begin{remark}
Under appropriate regularity conditions (continuity of $F$ and boundedness away from zero) the sample version (Definition~\ref{def:lcdd_sample}) is a uniformly consistent estimator of the population LCDD: as $n \to \infty$, the empirical $\beta(n-1)$ 
nearest neighbors converge to the population geodesic ball containing 
mass $\beta$, and the sample average of cosine distances converges to 
the population expectation.
\end{remark}

The following lemma establishes that this convergence is uniform over the hypersphere.

\begin{lemma}[Uniform consistency of LCDD]
\label{lem:uniform_consistency}
Let $F_n$ denote the empirical measure of a random sample $X_1, \ldots, X_n$ be i.i.d. from a distribution $F$ on $S^{q-1}$ with density $f$ that is continuous and bounded away from zero on its support. For any $\beta \in (0,1]$, we have
\[
\sup_{x \in S^{q-1}} \left| \mathrm{LCDD}^{(\beta)}(x, F_n) - \mathrm{LCDD}^{(\beta)}(x, F) \right| \xrightarrow{a.s.} 0
\quad\text{as } n \to \infty,
\]
where $\mathrm{LCDD}^{(\beta)}(x, F_n)$ denotes the sample LCDD based on $X_n = \{X_1,\dots,X_n\}$.
\end{lemma}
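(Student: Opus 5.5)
We will prove the lemma by writing both quantities as truncated means of $d_{\cos}(x,\cdot)$ over balls of the form $B(x,r)=\{y: d_{\cos}(x,y)\le r\}$, which are spherical caps $\{y:\langle x,y\rangle\ge 1-r\}$. Then we control the empirical process indexed by caps uniformly over the center $x$ and the radius $r$. Put $k_n=\lfloor\beta(n-1)\rfloor$ (or $\beta n$; the difference is $O(1/n)$ and negligible). By Proposition~\ref{prop2}, the sample depth is
\[
\mathrm{LCDD}^{(\beta)}(x,F_n)=2-\frac{1}{k_n}\sum_{j=1}^{k_n} r_{(j)}(x).
\]
We rewrite this as $2-\frac{n}{k_n}\int d_{\cos}(x,y)\,I\{d_{\cos}(x,y)\le \hat\rho_n(x)\}\,dF_n(y)$ plus a tie correction of order $1/n$. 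Here $\hat\rho_n(x)=r_{(k_n)}(x)$ is the empirical $\beta$-quantile of $d_{\cos}(x,X)$. Since $F$ has a density, ties occur with probability zero. For the population, $\mathrm{LCDD}^{(\beta)}(x,F)=2-\beta^{-1}\int d_{\cos}(x,y)I\{d_{\cos}(x,y)\le\rho^{(\beta)}(x)\}\,dF(y)$. For $x$ in the sample itself the point $x$ is excluded from its own neighbourhood. This only changes the empirical measure by $1/n$, so we treat $x$ as an arbitrary point of $S^{q-1}$.

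\textbf{Step 1 (uniform Glivenko--Cantelli).} The class of caps $\{B(x,r)\}$ is the class of half-spaces intersected with the sphere, so it is VC. The class of functions $y\mapsto d_{\cos}(x,y)I\{y\in B(x,r)\}=(1-\langle x,y\rangle)I\{\langle x,y\rangle\ge 1-r\}$ is a product of a bounded linear function class and a VC indicator class, so it is also uniformly Glivenko--Cantelli. Hence, almost surely,
\[
\sup_{x,r}\bigl|F_n(B(x,r))-F(B(x,r))\bigr|\to0,\qquad
\sup_{x,r}\Bigl|\int d_{\cos}(x,\cdot)I_{B(x,r)}\,d(F_n-F)\Bigr|\to0.
\]

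\textbf{Step 2 (uniform quantile consistency), the main obstacle.} We must show $\sup_x|\hat\rho_n(x)-\rho^{(\beta)}(x)|\to0$ a.s. For $\beta=1$ we instead argue directly that both sides equal full means, and the claim reduces to Step 1. For $\beta<1$, the assumptions on $f$ make $G_x(r)=F(B(x,r))$ strictly increasing near $\rho^{(\beta)}(x)$. We will derive a uniform modulus: for every $\varepsilon>0$ there is $\delta>0$ with $G_x(\rho^{(\beta)}(x)+\varepsilon)\ge\beta+\delta$ and $G_x(\rho^{(\beta)}(x)-\varepsilon)\le\beta-\delta$ for all $x$. This follows from compactness of $S^{q-1}$, joint continuity of $(x,r)\mapsto G_x(r)$, and $f$ being bounded below on the support. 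This is the delicate point, because the cap shell of width $\varepsilon$ must carry mass at least $\delta$ uniformly in $x$, which requires the shell to meet the support. We would justify it by compactness: the unique solvability assumed in Definition~\ref{def:lcdd_population} rules out flat pieces of $G_x$ at level $\beta$. Given the modulus, once $\sup|F_n-F|$ over caps is below $\delta/2$, we get $\hat\rho_n(x)\in[\rho^{(\beta)}(x)-\varepsilon,\rho^{(\beta)}(x)+\varepsilon]$ for all $x$ simultaneously.

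\textbf{Step 3 (assembly).} We decompose the difference of truncated integrals into three pieces. The first is the empirical-process term at the random radius $\hat\rho_n(x)$, which is controlled by Step 1 since that bound is uniform in $r$. The second is the population term $\bigl|\int d_{\cos}(x,\cdot)(I_{B(x,\hat\rho_n)}-I_{B(x,\rho^{(\beta)})})\,dF\bigr|\le 2|G_x(\hat\rho_n(x))-G_x(\rho^{(\beta)}(x))|$, which tends to $0$ uniformly by Step 2 and the uniform continuity of $G$ on the compact set $S^{q-1}\times[0,2]$. The third is the normalisation $|n/k_n-1/\beta|\to0$, multiplied by integrals bounded by $2$. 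Combining the three gives the almost sure uniform convergence.
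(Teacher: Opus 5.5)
Your proposal is correct and follows essentially the same route as the paper. Both use Glivenko--Cantelli over caps and over the truncated class $y\mapsto d_{\cos}(x,y)I\{d_{\cos}(x,y)\le t\}$, uniform consistency of the empirical radius $\hat\rho_n(x)=r_{(k_n)}(x)$, and a plug-in at the random radius combined with $k_n/n\to\beta$.

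Your Step 2 is more careful than the paper's: the paper appeals only to uniform continuity of $F(t;x)$ in $t$, whereas the uniform strict-increase modulus you state is what is actually needed. Step 2 can also be bypassed entirely, since $G_x(\rho^{(\beta)}(x))=\beta$ and $|G_x(\hat\rho_n(x))-\beta|\le\sup_{x,r}|F_n(B(x,r))-F(B(x,r))|+O(1/n)$ already control your second term in Step 3.
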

\begin{proof}
Let $k_n = \lfloor \beta(n-1) \rfloor$. For $x \in S^{q-1}$, let $r^{(1)}_n(x) \leq r^{(2)}_n(x) \leq \cdots \leq r^{(n)}_n(x)$ denote the ordered cosine distances $\{d_{\cos}(x,X_i)\}_{i=1}^n$.

Recall the definitions:
\[
\mathrm{LCDD}^{(\beta)}(x,F) = 2 - \mu_\beta(x), \quad \text{where} \quad 
\mu_\beta(x) = \frac{1}{\beta} E_F\big[d_{\cos}(x,Y) I\{d_{\cos}(x,Y) \leq \rho^{(\beta)}(x)\}\big],
\]
with $\rho^{(\beta)}(x)$ satisfying $F\{d_{\cos}(x,Y) \leq \rho^{(\beta)}(x)\} = \beta$.

The sample LCDD is:
\[
\mathrm{LCDD}^{(\beta)}(x,F_n) = 2 - \hat{\mu}_n(x), \quad \text{where} \quad
\hat{\mu}_n(x) = \frac{1}{k_n} \sum_{i=1}^{k_n} r^{(i)}_n(x).
\]

We prove uniform convergence in three steps.

\medskip
\noindent\textbf{Step 1} %: Uniform convergence of empirical quantiles.
Define the empirical process
\[
F_n(t;x) = \frac{1}{n} \sum_{i=1}^n I\{d_{\cos}(x,X_i) \leq t\},
\]
and its population counterpart $F(t;x) = P(d_{\cos}(x,Y) \leq t)$.

Since the class of sets $\{\{y: d_{\cos}(x,y) \leq t\}: x \in S^{q-1}, t \in [0,2]\}$ is a VC-class (as geodesic balls on the sphere), we have by the uniform Glivenko-Cantelli theorem:
\[
\sup_{x \in S^{q-1}} \sup_{t \in [0,2]} |F_n(t;x) - F(t;x)| \xrightarrow{a.s.} 0.
\]

Let $\rho^{(\beta)}(x)$ be the population $\beta$-quantile: $F(\rho^{(\beta)}(x);x) = \beta$.
Define the empirical quantile $\hat{\rho}_n(x) = r^{(k_n)}_n(x)$, which satisfies $F_n(\hat{\rho}_n(x);x) = k_n/n \to \beta$.

By the uniform continuity of $F(t;x)$ in $t$ (implied by $f$ being bounded away from zero) and uniform convergence of $F_n$, we obtain:
\[
\sup_{x \in S^{q-1}} |\hat{\rho}_n(x) - \rho^{(\beta)}(x)| \xrightarrow{a.s.} 0.
\]

\medskip
\noindent\textbf{Step 2} %: Uniform convergence of truncated empirical means.
Define the truncated empirical process:
\[
G_n(t;x) = \frac{1}{n} \sum_{i=1}^n d_{\cos}(x,X_i) I\{d_{\cos}(x,X_i) \leq t\},
\]
and its population counterpart $G(t;x) = E[d_{\cos}(x,Y) I\{d_{\cos}(x,Y) \leq t\}]$.

The class of functions
\[
\mathcal{F} = \{(x,y) \mapsto d_{\cos}(x,y) I\{d_{\cos}(x,y) \leq t\}: x \in S^{q-1}, t \in [0,2]\}
\]
is uniformly bounded (by 2) and, as a product of a Lipschitz function $d_{\cos}(x,\cdot)$ with an indicator of a VC-class, is itself a Glivenko-Cantelli class. Therefore,
\[
\sup_{x \in S^{q-1}} \sup_{t \in [0,2]} |G_n(t;x) - G(t;x)| \xrightarrow{a.s.} 0.
\]

Now, by the continuous mapping theorem and Step 1,
\[
G_n(\hat{\rho}_n(x);x) \xrightarrow{a.s.} G(\rho^{(\beta)}(x);x) = \beta \mu_\beta(x)
\]
uniformly in $x$. But also,
\[
G_n(\hat{\rho}_n(x);x) = \frac{1}{n} \sum_{i=1}^n d_{\cos}(x,X_i) I\{d_{\cos}(x,X_i) \leq \hat{\rho}_n(x)\} = \frac{k_n}{n} \cdot \hat{\mu}_n(x).
\]
Since $k_n/n \to \beta$ almost surely, we conclude:
\[
\sup_{x \in S^{q-1}} |\hat{\mu}_n(x) - \mu_\beta(x)| \xrightarrow{a.s.} 0.
\]

\medskip
\noindent\textbf{Step 3} %: Completion of the proof.}
From Steps 1 and 2, we have:
\[
\sup_{x \in S^{q-1}} |\mathrm{LCDD}^{(\beta)}(x,F_n) - \mathrm{LCDD}^{(\beta)}(x,F)| 
= \sup_{x \in S^{q-1}} |\hat{\mu}_n(x) - \mu_\beta(x)| \xrightarrow{a.s.} 0.
\]
This establishes the desired uniform almost sure convergence.
\end{proof}

\begin{corollary}
\label{cor:condition_a2}
Under the assumptions of Lemma~\ref{lem:uniform_consistency}, for any 
$\beta^* \in (0,1]$:
\begin{equation*}
\sup_{x \in S^{q-1}} \left| LCDD^{(\beta^*)}(x,\hat{F}_n) - LCDD^{(\beta^*)}(x,F) 
\right| \xrightarrow{p} 0
\end{equation*}
\end{corollary}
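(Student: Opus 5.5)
This corollary follows at once from Lemma~\ref{lem:uniform_consistency}, since almost sure convergence implies convergence in probability. The plan has two steps.

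\textbf{Step 1: identification.} Fix $\beta^*\in(0,1]$ and write
\[
Z_n := \sup_{x \in S^{q-1}} \left| \mathrm{LCDD}^{(\beta^*)}(x,\hat{F}_n) - \mathrm{LCDD}^{(\beta^*)}(x,F) \right|.
\]
Here $\hat{F}_n$ is the same empirical measure $F_n$ as in the lemma, and the sample LCDD is computed from the $\lfloor \beta^*(n-1)\rfloor$ nearest neighbours, as in Definition~\ref{def:lcdd_sample} and Proposition~\ref{prop2}. Lemma~\ref{lem:uniform_consistency}, applied with $\beta=\beta^*$, then gives $Z_n \to 0$ almost surely.

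\textbf{Step 2: from almost sure to in-probability convergence.} Fix $\varepsilon>0$ and set $A_n = \{\sup_{m\ge n} Z_m > \varepsilon\}$. The sets $A_n$ decrease, and their intersection lies in the null set on which $Z_n \not\to 0$. By continuity from above of the probability measure,
\[
P(Z_n>\varepsilon) \le P(A_n) \to 0,
\]
which is the claim. Equivalently, one can apply dominated convergence to the indicator $I\{Z_n>\varepsilon\}$.

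\textbf{Main obstacle: measurability of $Z_n$.} The only delicate point is that $Z_n$ must be a random variable, since it is a supremum over an uncountable index set. I would handle this by showing the supremum can be taken over a countable dense subset of $S^{q-1}$.
\begin{itemize}
\item The population map $x\mapsto \mathrm{LCDD}^{(\beta^*)}(x,F)$ is continuous. This comes from the continuity of $\rho^{(\beta^*)}(x)$, which holds because $f$ is continuous and bounded away from zero.
\item The sample map $x\mapsto \hat\mu_n(x)$ is the average of the $k_n$ smallest values of the continuous functions $d_{\cos}(x,X_i)$. A sum of the $k$ smallest order statistics of finitely many continuous functions is continuous, because it equals the minimum, over subsets of size $k$, of the corresponding sums.
\end{itemize}
Hence $Z_n$ is the supremum of a continuous function on $S^{q-1}$, and it equals the supremum over a countable dense set.

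Alternatively, I would simply interpret the convergence in outer probability, which is also implied by almost sure convergence, exactly as in the Glivenko--Cantelli argument used in Step~1 of the lemma.
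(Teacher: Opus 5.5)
Your proposal is correct and matches the paper's implicit argument. The paper states this corollary without proof because almost sure uniform convergence from Lemma~\ref{lem:uniform_consistency} at $\beta=\beta^*$ immediately gives convergence in probability; it uses exactly this in Step~2 of the proof of Theorem~\ref{thm:bayes_consistency} (``which implies assumption A2''). Your measurability check, which writes $x\mapsto\hat\mu_n(x)$ as a minimum over $k_n$-subsets of sums of continuous functions, is a sound extra detail that the paper does not address.
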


Note that the continuity result in Proposition \ref{prop:continuity} does not hold in the empirical case, where the LCDD is in general a piecewise constant function of $\beta$. Specifically, for a fixed point $x$ and $k \in \{1,2,\ldots,n-1 \}$, the $\mathrm{LCDD}^{(\beta)}(x,\hat F_n)$ remains constant on each interval $\big[\tfrac{k}{n-1}, \tfrac{k+1}{n-1}\big)$, while continuity at $\beta = k/(n-1)$ holds if and only if the cosine distance of the $k$ nearest points coincides with the average of the previous distances. Since this equality is highly unlikely in non-degenerate samples, the empirical LCDD is typically discontinuous in $\beta$.

\section{DD-classifier with local depth}
\label{sec:localDD}

The depth of a given point characterizes its location w.r.t. the whole distribution. Thus the classifiers which use any global depth function perform well only if the considered distributions
have some global properties like symmetry or unimodality. To obtain good performance also in more general settings, the use of some local depth should be preferred. The problem that emerged and need to be handle is choice of localization level.
The first classifier which employed local depth was proposed in 2013 by \cite{Hlubinka2013depth} who used a weighted halfspace depth. Later on, \cite{paindaveine2013depth} developed the more
complex approach we take inspiration from for our proposal. However, the just cited works which enables localization of any global depth function which is then used in the maximum depth classifier. 

Here, we use the proposed local cosine distance depth to be applied in the DD-plot, where then a polynomial separating function is adopted to discriminate between groups.

We focus on two-class classification problem. Let $\{X_{1},\ldots, X_{m}\}(\equiv X)$ and $\{Y_{1},\ldots, Y_{n}\}(\equiv  Y)$ be two random samples from $F_{1}$ and $F_{2}$, respectively, which are distributions defined on $S^{q-1}$ . As seen in \cite{pandolfo2022gld} and from the definition of the DD-plot, if $F_1 = F_2$, then DD-plot should be concentrated along the 45-degree line. Conversely, if the two distributions differ, the DD-plot would exhibit a departure from the 45-degree line.

Hence, given a locality level $\beta$, the proposed classifier is then defined as
\[
C_{\beta,s}(x)
=
\begin{cases}
2, & \text{if } LCDD^{(\beta)}(x,\hat F_{2})
\geq s\bigl(LCDD^{(\beta)}(x,\hat F_{1})\bigr),\\[1ex]
1, & \text{otherwise.}
\end{cases}
\]

For any given $s(\cdot)$, we then draw a curve corresponding to $y = s(x)$ in the DD-plot, and assign the observations above the curve to $F_{1}$ and those below it to $F_{2}$, and then calculate the empirical misclassification rate, that is
\begin{equation}
\begin{split}
\hat{R}_s &= \frac{\pi_1}{m} \sum_{i=1}^m I_{\{LCDD^{(\beta)}(X_i,\hat{F}_1) \leq s(LCDD^{(\beta)}(X_i,\hat{F}_2))\}} \\
&+\frac{\pi_2}{n} \sum_{i=1}^n I_{\{LCDD^{(\beta)}(Y_i,\hat{F}_1) > s(LCDD^{(\beta)}(Y_i,\hat{F}_2))\}}. \label{eq:misclrate}
\end{split}
\end{equation}
Where $\pi_{i}$ are the prior probabilities of the two classes, $N = (m, n)$, and $I_{A}$ is the indicator function which takes $1$ if $A$ is true and $0$ otherwise. Hence, $s(\cdot)$ is estimated to minimize $\hat{R}_s$. Following \cite{Li01062012}, here we consider  
$$
s(x) = \sum_{i=1}^{k_{0}}a_{i}x^{i},
$$ 
where $k_{0}$ is the given degree of the polynomial and $\mathbf{a} = (a_{1},\ldots,a_{k_{0}}) \in \mathbb{R}^{k_{0}}$ is the coefficient vector of the polynomial.

\begin{theorem}[Bayes Consistency of LCDD-DD Classifier]
\label{thm:bayes_consistency}
Let $F_1, F_2$ be distributions on $S^{q-1}$ with continuous densities $f_1, f_2$ bounded away from zero, and let $\pi_1, \pi_2 > 0$ be class priors with $\pi_1 + \pi_2 = 1$. 
Let $\mathcal{R}$ be a compact class of functions (pointwise compact) consisting of polynomials of degree at most $k_0$.

Define the empirical LCDD depths $D^{(\beta)}_{\hat{F}_j}(z) := \mathrm{LCDD}^{(\beta)}(z, \hat{F}_j)$ for $j=1,2$, where $\hat{F}_1, \hat{F}_2$ are the empirical distributions from samples of sizes $n_1, n_2$ respectively, with total sample size $N = n_1 + n_2$. Consider the classifier:
\[
C_{\beta,s}(z) = 
\begin{cases}
2 & \text{if } D^{(\beta)}_{\hat{F}_2}(z) \geq s\big(D^{(\beta)}_{\hat{F}_1}(z)\big) \\
1 & \text{otherwise}
\end{cases}
\]

Assume the following conditions hold:

\begin{itemize}\setlength{\itemindent}{0.5cm}
\item[(A1)] There exists a unique $\beta^* \in (0,1]$ and $s_B \in \mathcal{R}$ such that the classifier $C_{\beta^*,s_B}$ equals the Bayes classifier almost everywhere, i.e.,
\[
C_{\beta^*,s_B}(z) = I\{\pi_2 f_2(z) > \pi_1 f_1(z)\} \quad \text{a.e.}
\]

\item[(A2)] For $j=1,2$ and any $\beta \in (0,1]$,
\[
\sup_{z \in S^{q-1}} \big|D^{(\beta)}_{\hat{F}_j}(z) - D^{(\beta)}_{F_j}(z)\big| \xrightarrow{p} 0
\]
as $n_j \to \infty$.

\item[(A3)] $\mathcal{R}$ is compact in the topology of pointwise convergence.

\item[(A4)] For each $N$, let $B_N \subset (0,1]$ be a finite grid such that 
\[
\max_{\beta, \beta' \in B_N} |\beta - \beta'| \to 0 \quad \text{as } N \to \infty.
\]
Define $\hat{s}_{N,\beta} = \underset{s \in \mathcal{R}}{\argmin}~\widetilde{R}_N(s,\beta)$, where $\widetilde{R}_N$ is the empirical risk, and select
\[
\hat{\beta}_N = \argmin_{\beta \in B_N} \mathrm{CV}(\hat{s}_{N,\beta}, \beta),
\]
where $\mathrm{CV}$ denotes cross-validated error. Finally, set $\hat{s}_N = \hat{s}_{N,\hat{\beta}_N}$ and $\hat{C}_N = C_{\hat{\beta}_N, \hat{s}_N}$.
\end{itemize}

Then, as $N \to \infty$ with $n_j/N \to \lambda_j \in (0,1)$ for $j=1,2$, we have:

\begin{itemize}\setlength{\itemindent}{0.5cm}
\item[(i)] $\hat{\beta}_N \xrightarrow{p} \beta^*$;
\item[(ii)] $\hat{s}_N \xrightarrow{p} s_B$ pointwise;
\item[(iii)] $R(\hat{C}_N) \xrightarrow{p} R_{\mathrm{Bayes}} = R(C_{\beta^*,s_B})$.
\end{itemize}
Assumption (A1) requires that there exists some locality level $\beta^*$ for which the LCDD-based classifier achieves the Bayes error. This is a reasonable assumption when the data have local structure that can be captured at an appropriate scale. In practice, even if the exact Bayes classifier is not achievable with polynomial separators in the LCDD space, the theorem guarantees that our data-driven procedure will approach the best possible performance within the class $\mathcal{R}$.
\end{theorem}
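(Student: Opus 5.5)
The plan is an argmin-consistency argument over the joint parameter $(\beta,s)\in(0,1]\times\mathcal{R}$, following the scheme of Li, Cuesta-Albertos and Liu for the DD-classifier. I would introduce the population risk
\[
R(s,\beta)=\pi_1 P_{F_1}\bigl(C_{\beta,s}(Z)=2\bigr)+\pi_2 P_{F_2}\bigl(C_{\beta,s}(Z)=1\bigr),
\]
where $C_{\beta,s}$ is built from the population depths $D^{(\beta)}_{F_j}$. By (A1) and the optimality of the Bayes rule, $R(s,\beta)\ge R_{\mathrm{Bayes}}=R(s_B,\beta^*)$, and the minimizer is unique. I would then obtain (i)--(iii) from two facts. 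The first is uniform convergence of the empirical risk $\widetilde{R}_N$ (and of the CV criterion) to $R$ over $B_N\times\mathcal{R}$. The second is that the population minimizer is well separated. A standard argmin theorem then finishes the argument.

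\textbf{Step 1: replacing empirical depths by population depths.} By Corollary~\ref{cor:condition_a2}, equivalently (A2), $\sup_z|D^{(\beta)}_{\hat F_j}(z)-D^{(\beta)}_{F_j}(z)|\to0$ in probability. Proposition~\ref{prop:continuity} gives Lipschitz dependence on $\beta$, uniformly in $z$. Combining the two with the finite grid in (A4), I would upgrade this to
\[
\sup_{\beta\in B_N}\sup_z\bigl|D^{(\beta)}_{\hat F_j}(z)-D^{(\beta)}_{F_j}(z)\bigr|\xrightarrow{p}0.
\]
To do so, I would bound the sample version of the continuity modulus using the Glivenko--Cantelli classes from Lemma~\ref{lem:uniform_consistency}, which are already uniform in the radius $t$ and hence in $\beta$. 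Since $\mathcal{R}$ consists of polynomials of degree at most $k_0$ whose coefficients lie in a compact set (A3), the functions $s\in\mathcal{R}$ are uniformly Lipschitz on $[0,2]$. Consequently the indicators defining $\widetilde{R}_N$ and $R$ differ only on the event
\[
\bigl\{\,|D^{(\beta)}_{F_2}(Z)-s(D^{(\beta)}_{F_1}(Z))|\le\varepsilon_N\,\bigr\},\qquad \varepsilon_N\to0 .
\]
Controlling the probability of this boundary-margin set is the step I expect to be the \textbf{main obstacle}. Its probability must vanish uniformly in $(\beta,s)$, which requires a no-mass-on-the-boundary condition. I would try to derive it from the density assumptions: the preimage of a polynomial graph under the continuous depth map has $F_j$-measure zero, except in degenerate cases. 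If that fails, I would impose it as an added regularity condition, as Li et al.\ effectively do. With this in hand, a VC argument applies, since the sets $\{z: D_2\ge s(D_1)\}$ indexed by polynomial coefficients have finite VC dimension, and it yields $\sup_{\beta\in B_N,s\in\mathcal{R}}|\widetilde{R}_N(s,\beta)-R(s,\beta)|\xrightarrow{p}0$. The same bound holds for $\mathrm{CV}$, because each fold is an empirical risk on an independent subsample.

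\textbf{Step 2: identification and the argmin conclusion.} I would show that $R$ is continuous in $(s,\beta)$ on $(0,1]\times\mathcal{R}$, using Proposition~\ref{prop:continuity}, dominated convergence and the boundary-null property. Together with compactness, this turns uniqueness in (A1) into well-separatedness: for every open neighbourhood $U$ of $(s_B,\beta^*)$,
\[
\inf_{(s,\beta)\notin U} R(s,\beta)>R(s_B,\beta^*).
\]
The grid in (A4) contains a point $\beta_N\to\beta^*$, so
\[
\mathrm{CV}(\hat s_{N,\hat\beta_N},\hat\beta_N)\le \mathrm{CV}(\hat s_{N,\beta_N},\beta_N)=R(s_B,\beta^*)+o_p(1).
\]
By the uniform convergence of Step 1, $R(\hat s_N,\hat\beta_N)\le R_{\mathrm{Bayes}}+o_p(1)$. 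Well-separatedness then forces $(\hat s_N,\hat\beta_N)$ into every neighbourhood of $(s_B,\beta^*)$ with probability tending to one, which gives (i) and (ii); pointwise convergence suffices here because the topology on $\mathcal{R}$ is the pointwise one. Finally, (iii) follows because the risk $R(\hat C_N)$ uses empirical depths, and by Step 1 it differs from $R(\hat s_N,\hat\beta_N)$ by $o_p(1)$. Sandwiching this with $R(\hat C_N)\ge R_{\mathrm{Bayes}}$ proves the claim.
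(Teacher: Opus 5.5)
There is a genuine gap. Your joint argmin over $(\beta,s)$ organises the argument better than the paper's two-stage proof. The paper first gets $\hat\beta_N\to\beta^*$ from the profile risk $R^*(\beta)=\inf_s R(\beta,s)$. It then infers $\hat s_{N,\hat\beta_N}\to s_B$ from ``continuity of the risk'', which does not follow without well-separation. It also silently identifies $R(\hat C_N)$, which uses empirical depths, with the population-depth risk. Your margin control and your comparison through a grid point $\beta_N\to\beta^*$ repair both defects.

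\textbf{The compactness step fails.} The step everything rests on, ``together with compactness, uniqueness in (A1) becomes well-separatedness'', does not hold. The set $(0,1]\times\mathcal{R}$ is not compact, and nothing keeps the risk away from $R_{\mathrm{Bayes}}$ as $\beta\to0^+$; typically it is not kept away.
\begin{itemize}
\item If $f_1,f_2>0$ on the whole sphere, a small-cap expansion gives $2-D^{(\beta)}_{F_j}(z)=c_q\,(\beta/f_j(z))^{2/(q-1)}(1+o(1))$ uniformly in $z$. So at small $\beta$ the LCDD behaves like a monotone transform of the density.
\item Take, for instance, $\pi_1=\pi_2$ with the identity in $\mathcal{R}$. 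Then $C_{\beta,\mathrm{id}}$ tends to the Bayes rule, and $R(\mathrm{id},\beta)\to R_{\mathrm{Bayes}}$ whenever $\{f_1=f_2\}$ is null. Hence $\inf_{(s,\beta)\notin U}R(s,\beta)=R_{\mathrm{Bayes}}$ for every neighbourhood $U$ of $(s_B,\beta^*)$.
\item Grid points tending slowly to $0$ turn the sample LCDD into a $k$-nearest-neighbour statistic with $k\to\infty$ and $k/n\to0$. Its CV error can also be asymptotically Bayes, so nothing forces $\hat\beta_N$ towards $\beta^*$.
\item The same collapse of all depth pairs onto $(2,2)$ breaks your Step 1 uniformly in $\beta$. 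For $s$ with $s(2)=2$, the margin set $\{|D_2-s(D_1)|\le\varepsilon_N\}$ has probability one once $\beta^{2/(q-1)}\ll\varepsilon_N$. The constant in Proposition~\ref{prop:continuity} also degenerates as $\beta\to0$.
\end{itemize}
You need either $B_N\subset[\beta_0,1]$ for a fixed $\beta_0>0$, or an assumption such as $\liminf_{\beta\to0^+}\inf_s R(s,\beta)>R_{\mathrm{Bayes}}$. On $[\beta_0,1]\times\mathcal{R}$ the compactness is genuine and your argmin argument goes through.

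\textbf{The boundary condition must be assumed.} Your fallback here is forced, not optional, because it cannot be derived from the density hypotheses. A continuous density bounded away from zero may be constant on an open cap. If $f_1$ and $f_2$ are both constant on a common cap, then for small $\beta$ the pair $(D_1,D_2)$ is constant, say $(c_1,c_2)$, on a smaller cap of positive mass. Every $s$ with $s(c_1)=c_2$ then puts that whole set on the decision boundary. There the empirical depths fall on both sides, and $\widetilde R_N(s,\beta)$ need not converge to $R(s,\beta)$. So $P_{F_j}\{D_2=s(D_1)\}=0$ for all $(s,\beta)$ must be added as a hypothesis. With it, on the compact set $[\beta_0,1]\times\mathcal{R}$, upper semicontinuity of $(s,\beta)\mapsto P_{F_j}\{|D_2-s(D_1)|\le\varepsilon\}$ and Dini's theorem give the uniform margin bound you need.

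Uniqueness of the risk minimiser also requires $\{\pi_1f_1=\pi_2f_2\}$ to be null. Otherwise pairs that deviate from the Bayes rule only on the tie set also attain $R_{\mathrm{Bayes}}$.

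The paper's proof relies on these same ingredients without stating them. Its continuity ``by dominated convergence'' presupposes the boundary-null property, and its Step 3 presupposes a well-separated minimiser of $R^*$ on $(0,1]$. So the paper does not supply the missing idea either.
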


\begin{proof}
We adapt the framework of \cite{Li01062012} to our LCDD-based classifier with data-driven selection of the locality parameter $\beta$.

\medskip
\noindent\textbf{Step 1:} %: Continuity in the locality parameter.}
By Proposition \ref{prop:continuity}, for any $\beta \in (0,1)$ and $\Delta\beta > 0$ with $\beta + \Delta\beta \leq 1$,
\[
\sup_{x \in S^{q-1}} \big|\mathrm{LCDD}^{(\beta)}(x,F_j) - \mathrm{LCDD}^{(\beta+\Delta\beta)}(x,F_j)\big| = O(\Delta\beta), \quad j=1,2,
\]
uniformly in $x$. In particular, for each fixed $z$, the map $\beta \mapsto D^{(\beta)}_{F_j}(z)$ is continuous on $(0,1]$.

The misclassification error of the classifier $C_{\beta,r}$ is:
\[
R(\beta,s) = \pi_1 P_{F_1}(C_{\beta,s}(Z) = 2) + \pi_2 P_{F_2}(C_{\beta,s}(Z) = 1).
\]
Since the classifier depends on $z$ only through $(D^{(\beta)}_{F_1}(z), D^{(\beta)}_{F_2}(z))$ and the indicator of the decision region, the dominated convergence theorem implies that for each fixed $s \in \mathcal{R}$, the function $\beta \mapsto R(\beta,s)$ is continuous on $(0,1]$.

\medskip
\noindent\textbf{Step 2:} 
%: Uniform convergence of empirical processes.} 
Fix $\beta \in (0,1]$ and $j \in \{1,2\}$. By Lemma \ref{lem:uniform_consistency} (which implies assumption A2),
\[
\sup_{z \in S^{q-1}} \big|D^{(\beta)}_{\hat{F}_j}(z) - D^{(\beta)}_{F_j}(z)\big| \xrightarrow{a.s.} 0.
\]
Since $B_N$ is finite for each $N$, a union bound yields:
\[
\max_{\beta \in B_N} \sup_{z \in S^{q-1}} \big|D^{(\beta)}_{\hat{F}_j}(z) - D^{(\beta)}_{F_j}(z)\big| \xrightarrow{a.s.} 0 \quad \text{as } n_j \to \infty.
\]

Consider the class of decision sets in the depth space:
\[
\mathcal{D} = \big\{\{(u,v) \in [0,2]^2 : v \leq s(u)\} : s \in \mathcal{R}\big\}.
\]
Since $\mathcal{R}$ consists of polynomials of bounded degree, $\mathcal{D}$ has finite VC dimension \citep[Theorem 4]{Li01062012}. Therefore, $\mathcal{D}$ is a Glivenko-Cantelli class, which implies:
\[
\sup_{\beta \in B_N} \sup_{s \in \mathcal{R}} \big|\widehat{R}_N(s,\beta) - R(\beta,s)\big| \xrightarrow{p} 0,
\]
where $\widehat{R}_N(s,\beta)$ is the empirical risk of $C_{\beta,s}$.

\medskip
\noindent\textbf{Step 3:} 
% Consistency of $\hat{\beta}_N$.}
Define the optimal risk function:
\[
R^*(\beta) = \inf_{s \in \mathcal{R}} R(\beta,s), \quad \beta \in (0,1].
\]
From Step 1, $R(\beta,s)$ is continuous in $\beta$ for each $s$, and by compactness of $\mathcal{R}$, the infimum is attained and $R^*(\beta)$ is continuous. Assumption (A1) implies that $\beta^*$ is the unique minimizer of $R^*(\beta)$ over $(0,1]$, with $R^*(\beta^*) = R_{\mathrm{Bayes}}$.

For each $\beta \in B_N$, $\hat{s}_{N,\beta}$ minimizes the empirical risk. By the uniform convergence in Step 2 and standard M-estimation theory:
\[
\sup_{\beta \in B_N} \big|R(\beta,\hat{s}_{N,\beta}) - R^*(\beta)\big| \xrightarrow{p} 0.
\]

Since $R^*$ is continuous with unique minimizer $\beta^*$, and $B_N$ becomes dense in $(0,1]$ as $N \to \infty$ by (A4), we obtain:
\[
\hat{\beta}_N = \underset{\beta \in B_N}{\argmin}~ \mathrm{CV}(\hat{s}_{N,\beta},\beta) \xrightarrow{p} \beta^*.
\]

\medskip
\noindent\textbf{Step 4:} 
% Consistency of $\hat{s}_N$ and risk convergence.}
For fixed $\beta = \beta^*$, the empirical risk minimizer satisfies $\hat{s}_{N,\beta^*} \xrightarrow{p} s_B$ by standard consistency results for classification with VC classes. The convergence $\hat{\beta}_N \xrightarrow{p} \beta^*$ and continuity of the risk function imply:
\[
\hat{s}_N = \hat{s}_{N,\hat{\beta}_N} \xrightarrow{p} s_B.
\]

Finally, for the risk convergence:
\[
|R(\hat{C}_N) - R_{\mathrm{Bayes}}| \leq |R(\hat{\beta}_N,\hat{s}_N) - R(\hat{\beta}_N,s_B)| + |R(\hat{\beta}_N,s_B) - R(\beta^*,s_B)|.
\]
The first term converges to 0 because $\hat{s}_N \xrightarrow{p} s_B$ and $r \mapsto R(\beta,s)$ is continuous uniformly in $\beta$ on compacts. The second term converges to 0 because $\hat{\beta}_N \xrightarrow{p} \beta^*$ and $\beta \mapsto R(\beta,s_B)$ is continuous. Hence,
\[
R(\hat{C}_N) \xrightarrow{p} R(\beta^*,s_B) = R_{\mathrm{Bayes}}.
\]
\end{proof}

In the following sections, we investigate the practical performance of the LCDD-DD classifier through simulations (Section~\ref{simulation}) and real-data applications (Section~\ref{applications}). We compare it with global depth-based classifiers and other directional classification methods

\section{Simulations}
\label{simulation}

Among the various applications of data depth, supervised classification represents the most prominent one, particularly in the context of directional data, where the absence of a natural ordering poses specific challenges.
In this study, the proposed local depth function is evaluated against its global counterpart through a simulation experiment, in which both serve as the underlying measures for DD-classifier training. 
This section presents the simulation study designed to compare the performance of the Global and Local CDD when incorporated into the DD-classifier. 
Two distinct simulation scenarios are considered, each described in detail in the following subsections and including three experimental setups.

\subsection{The simulation design}
The first simulation scenario aims to assess the potential classification improvement achieved by the local depth compared to its global counterpart when observations belonging to the same class are distributed across multiple clusters. 
The second simulation scenario investigates a different type of data distribution, characterized by a non-convex structure combined with a strongly pronounced multimodality.

For each setup, we generated $100$ datasets of size $n = 500$. 
The neighborhood parameter, expressed as the proportion  $\beta$ of nearest units within each class, takes values $\beta \in \{0.05, 0.10, 0.25\}$, 
while the data dimensionality varies across $d \in \{3, 10, 25\}$.
In accordance with \cite{Guyon1997ASL}, 70\% of the observations were allocated to the training set and the remaining 30\% to the test set.

In this study we focus on the binary classification setting with the implementation of DD-classifier. 
Let $W_{1i}$, $i = 1, \dots, n_1$, and $W_{2i}$, $i = 1, \dots, n_2$, denote independent random samples drawn from the distributions $F_1$ and $F_2$ on $S^{q-1}$, respectively, where $n_1$ and $n_2$ are the cardinality of the two classes. 
The proportion between the two classes was randomly chosen to lie between 35\% and 50\%. 

The evaluation metric considered for each simulated dataset is the misclassification rate (MR), as defined in eq. \ref{eq:misclrate}.

The simulation was fully implemented in \textsf{R}, using the \textit{ddalpha} package \citep{JSSv091i05}, which allows the customization of the DD-classifier training procedure to incorporate the proposed depth function and to automatically select the polynomial degree $p$ through a cross-validation scheme performed in the depth space.

\subsubsection{Scenario 1}\label{vMF}

The data of the first simulated scenario were generated according to the von Mises--Fisher (vMF) distribution, which plays for data on the unit hypersphere $S^{q-1}$ the same role as the normal distribution does for unconstrained Euclidean data. 
A $(q-1)$-dimensional unit random vector $x \in S^{q-1}$ is said to follow a vMF distribution if its probability density function is $f_q(x \mid \mu, \kappa) = C_q(\kappa)\exp(\kappa \mu' x)$,
where $\|\mu\| = 1$, $\kappa \geq 0$, and $q \geq 2$. 
The normalizing constant is given by
$$C_q(\kappa) = \frac{\kappa^{q/2 - 1}}{(2\pi)^{q/2} I_{q/2 - 1}(\kappa)},$$
where $I_b$ denotes the modified Bessel function of the first kind and order $b$.
The vMF distribution is characterized by the mean direction $\mu$ and the concentration parameter $\kappa$, which controls the degree of dispersion of the observations around the mean vector. 
In the limiting cases, when $\kappa = 0$, the distribution reduces to the uniform density on $S^{q-1}$, whereas as $\kappa \to \infty$ it degenerates into a point mass at $\mu$.
    
For the simulation study, three experimental setups were considered. 
For each setup, the concentration parameter $\kappa$ was randomly drawn to induce low, medium, and high noise levels in the data. 
Specifically, $\kappa_{\text{low}} \sim U[15,17]$, $\kappa_{\text{medium}} \sim U[10,12]$, and $\kappa_{\text{high}} \sim U[5,7]$. 

Each class was generated as a mixture of vMF distributions, each one indicated as $F_{(\mu_j^w,\kappa)}^{vMF}$. 
The choice of each center $\mu_j^w$, where $j$ denotes the component within class $w$, was made randomly but constrained to lie at specific cosine distances $d_{cos}(\cdot,\cdot)$ from the other centers. 
Starting from the same initial center $\mu_1^1 = (\eta_1, \dots, \eta_{q-1}, \eta_q)$, where $\eta_1 = 1$ and $\eta_t = 0$ for all $t = 2, \dots, q$, 
the remaining centers were generated according to the following setups:

\begin{itemize}
    \item \textbf{Setup 1:} 
    The second class center $\mu_1^2$ is randomly generated and constrained to satisfy $d_{cos}(\mu_1^1, \mu_1^2) \in [0.3, 0.5]$. 
    Thus, points for each class are drawn from vMF distributions with different mean directions but equal concentration parameter $\kappa$: 
    $F_1 = F_{(\mu_1^1, \kappa)}^{\text{vMF}}$ and $F_2 = F_{(\mu_1^2, \kappa)}^{\text{vMF}}$.

    \item \textbf{Setup 2:} 
    The second component of the first class, $\mu_2^1$, is randomly generated under the constraint $d_{cos}(\mu_1^1, \mu_2^1) \in [0.6, 0.8]$. 
    Then, $\mu_1^2$ is generated so that $d_{cos}(\mu_1^1, \mu_1^2) = d_{cos}(\mu_2^1, \mu_1^2) \in [0.25, 0.45]$. 
    Finally, $\mu_2^2$ is generated under the constraints 
    $d_{cos}(\mu_2^2, \mu_1^2) \in [d_{cos}(\mu_1^1, \mu_2^1) - \epsilon, d_{cos}(\mu_1^1, \mu_2^1) + \epsilon]$ and 
    $d_{cos}(\mu_2^1, \mu_2^2) \in [d_{cos}(\mu_1^1, \mu_1^2) - \epsilon, d_{cos}(\mu_1^1, \mu_1^2) + \epsilon]$, 
    where $\epsilon = 0.1$. 
    In this case, each class is modeled as an equally weighted mixture of two vMF distributions with different mean directions and the same concentration parameter $\kappa$: 
    $F_1 = \tfrac{1}{2} F_{(\mu_1^1, \kappa)}^{\text{vMF}} + \tfrac{1}{2} F_{(\mu_2^1, \kappa)}^{\text{vMF}}$ and 
    $F_2 = \tfrac{1}{2} F_{(\mu_1^2, \kappa)}^{\text{vMF}} + \tfrac{1}{2} F_{(\mu_2^2, \kappa)}^{\text{vMF}}$.

    \item \textbf{Setup 3:} 
    The second component of the first class, $\mu_2^1$, is randomly generated such that $d_{cos}(\mu_1^1, \mu_2^1) \in [0.4, 0.6]$. 
    Then, $\mu_1^2$ is generated under the constraints $d_{cos}(\mu_1^1, \mu_1^2) \in [0.4, 0.6]$ and $d_{cos}(\mu_2^1, \mu_1^2) \in [0.8, 1]$. 
    Finally, $\mu_2^2$ is generated so that $d_{cos}(\mu_1^1, \mu_2^2) \in [0.4, 2]$, 
    $d_{cos}(\mu_2^1, \mu_2^2) \in [0.4, 2]$, and $d_{cos}(\mu_1^2, \mu_2^2) \in [0.8, 2]$. 
    Similarly to Setup 2, each class is represented by an equally weighted mixture of two vMF components: 
    $F_1 = \tfrac{1}{2} F_{(\mu_1^1, \kappa)}^{\text{vMF}} + \tfrac{1}{2} F_{(\mu_2^1, \kappa)}^{\text{vMF}}$ and 
    $F_2 = \tfrac{1}{2} F_{(\mu_1^2, \kappa)}^{\text{vMF}} + \tfrac{1}{2} F_{(\mu_2^2, \kappa)}^{\text{vMF}}$.
\end{itemize}

The outcomes of this scenario are displayed in Figure \ref{fig:sc1}, which reports the distributions of the misclassification rates across setups, dimensions, and noise levels. 
A detailed interpretation of these findings is provided in \ref{res}.

\subsubsection{Scenario 2}
The Watson distribution was selected as the generating model for the second scenario of the simulation study. 
Although originally defined for axial data, the Watson distribution can effectively produce non-convex and bipolar structures, making it suitable for assessing the behavior of local depth functions in complex directional settings.

A random unit vector $x \in S^{q-1}$ follows a Watson distribution if its probability density function is given by 
\[
f(x \mid \mu, \kappa) = M\left(\frac{1}{2},\frac{q}{2}, \kappa \right)^{-1} \exp\{\kappa (\mu' x)^2 \},
\]
where $M(1/2, q/2, \cdot)$ denotes the Kummer's function.

The parameter $\mu$ represents the mean axis, while $\kappa$ determines the axial concentration of the data.
For $\kappa > 0$, the distribution is bipolar, and as $\kappa$ increases, it becomes increasingly concentrated around $\pm \mu$.
For $\kappa < 0$, it becomes a symmetric girdle distribution, with data concentrated in the subspace orthogonal to $\mu$, and the degree of dispersion is governed by the magnitude of $\kappa$.

As in the first scenario three setups were built using the Watson distribution, indicated with $F_{(\mu, \kappa)}^{Wat}$, and random values of $\kappa$ were used to produce low, medium, and high levels of noise in the data. 
Specifically, $\kappa_{\text{low}} \sim U[15,17]$, $\kappa_{\text{medium}} \sim U[10,12]$, and $\kappa_{\text{high}} \sim U[5,7]$.

In each setup, the mean axes of the two classes are randomly generated and constrained to have a cosine distance between $0.5$ and $0.7$, starting from an initial center defined as $\mu^1 = (\eta_1, \dots, \eta_{q-1}, \eta_q)$, where $\eta_1 = 1$ and $\eta_t = 0$ for all $t = 2, \dots, q$. 
To evaluate different data configurations, the sign of the concentration parameter $\kappa$ varies across the setups:

\begin{itemize}
    \item \textbf{Setup 1:} The parameter $\kappa$ is positive for both classes, resulting in two bipolar distributions: $F_1= F_{(\mu^1, \kappa)}^{Wat}$ and $F_2= F_{(\mu^2, \kappa)}^{Wat}$.
    \item \textbf{Setup 2:} The parameter $\kappa$ is negative for both classes, producing two girdle-shaped distributions: $F_1= F_{(\mu^1, -\kappa)}^{Wat}$ and $F_2= F_{(\mu^2, -\kappa)}^{Wat}$.
    \item \textbf{Setup 3:} The parameter $\kappa$ is positive for one class and negative for the other, leading to two response populations with different shapes: $F_1= F_{(\mu^1, \kappa)}^{Wat}$ and $F_2= F_{(\mu^2, -\kappa)}^{Wat}$.
\end{itemize}

The results of this scenario are reported in Figure \ref{fig:sc2}. 
The distributions of the misclassification rates are displayed, further conditioned by the setup, data dimension, and noise level. 
A detailed discussion of these results is provided in \ref{res}.

\begin{figure}[h!]
    \centering
    \includegraphics[width=\linewidth]{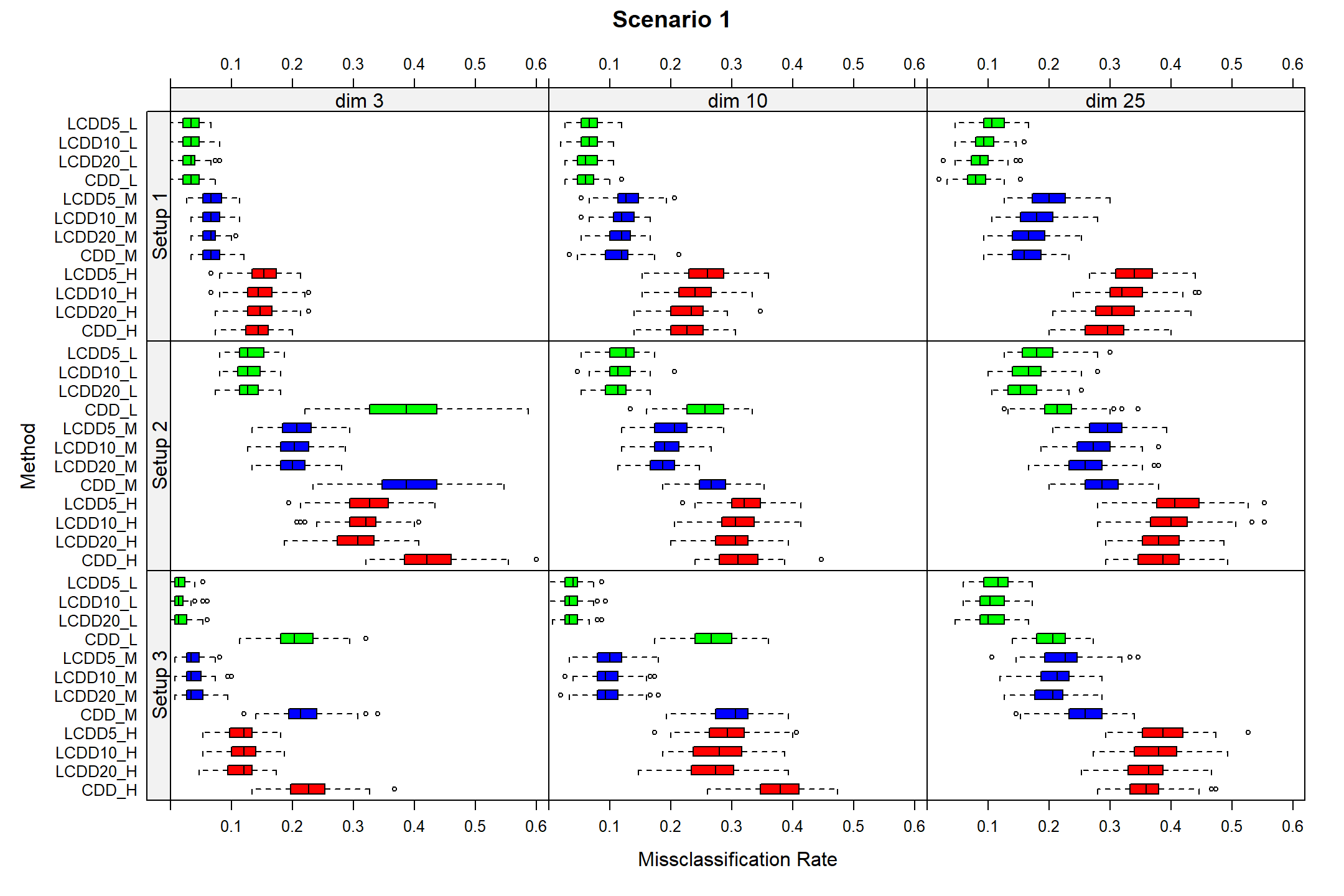}
    \caption{Simulation results for Scenario 1. The rows of the table indicate the specific Setup, the columns the number of dimensions, and the three different colors indicate noise levels: Low (L)--green, Medium (M)--blue, High (H)--red. }
    \label{fig:sc1}
\end{figure}

\subsection{Results}
\label{res}

Considering the results shown in Figure \ref{fig:sc1} and Figure \ref{fig:sc2} for the first and second scenario, respectively, the prediction error consistently increases with both the data dimension and the noise level, regardless of the specific neighborhood size (including the CDD). 
In both scenarios, there is effectively no difference between the local and global approaches when the dimension is $25$ and the noise level is high, as their classification errors become almost indistinguishable. 
Moreover, within the local framework, the three neighborhood proportions considered ($5\%$, $10\%$, and $25\%$) yield very similar performance.

Focusing on the first scenario, which relies on the vMF distribution, in the first setup, that does not involve multimodality or non-convexity, all methods exhibit nearly identical performances. 
Nevertheless, when the dispersion becomes very high, the two classes tend to overlap, making it slightly more advantageous to use larger neighborhood proportions for the depth computation, although the improvement is modest.

\begin{figure}[h!]
    \centering
    \includegraphics[width=\linewidth]{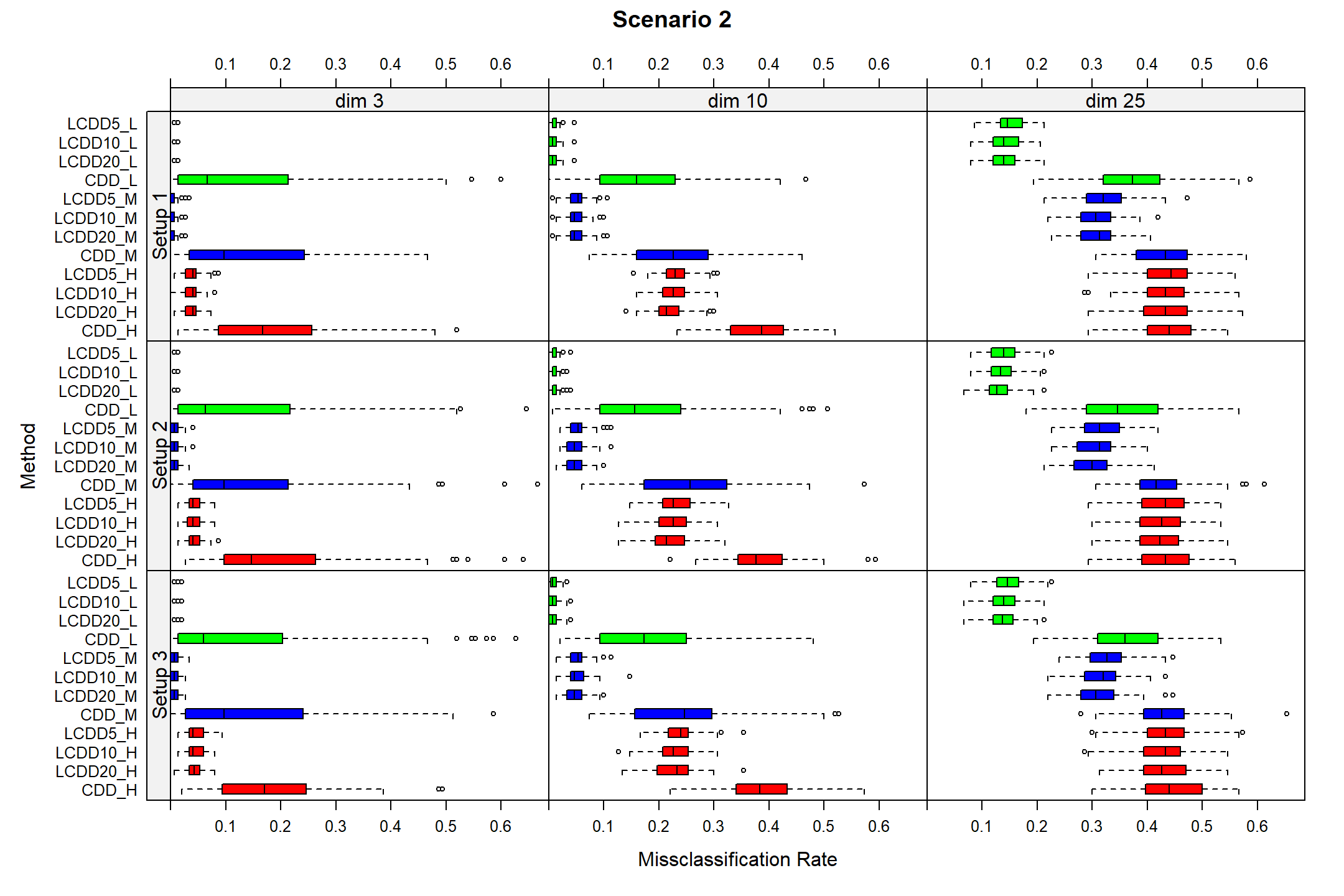}
    \caption{Simulation results for Scenario 2. The rows of the table indicate the specific Setup, the columns the number of dimensions, and the three different colors indicate noise levels: Low (L)--green, Medium (M)--blue, High (H)--red.}
    \label{fig:sc2}
\end{figure}

As the structure becomes more complex, as in the second setup, the CDD begins to display a higher prediction error compared to the LCDD, indicating that a local perspective is preferable in this setting, particularly under low noise. 
For higher noise levels and $10$ dimensions or more, however, the performance of all methods becomes essentially indistinguishable.

In the third setup of the same scenario, except for the previously discussed cases regarding the general behaviour in high dimensions and high noise, the global approach consistently underperforms the LCDD.
In particular, for $3$ and $10$ dimensions under low noise, the local depth achieves a misclassification error below $1\%$, demonstrating excellent performance even in the presence of a highly structured and challenging data configuration.

The classification performance of the CDD was previously investigated by \cite{pandolfo2021depth}, who compared it with other depth measures and showed that it performs very well in several settings involving the vMF distribution. 
However, in the context of the second scenario, which is based on the Watson distribution instead, the results are straightforward to interpret: the CDD consistently underperforms compared to the LCDD under all imposed conditions. 
This indicates that the global CDD lacks the flexibility required to adapt to scenarios in which classes are not well separated, as is the case in this simulation design.

To further enrich the comparison between the CDD and the LCDD, the next section presents an application to real datasets. 
This additional analysis allows us to assess the practical performance of the local depth approach in real-world situations and highlights its potential as an effective and flexible strategy for directional-data classification through depth-based methods.

\section{Real data examples} 
\label{applications}

In this section we compare the performance of the proposed classifier in comparison with its global version by means of two real-world datasets.  
We ran a 10-fold cross-validation repeated on 10 different training sets \citep{paindaveine2013depth} in order to select the best possible value of $\beta$ in the set $\{0.01,0.05,0.1,0.25,0.5,1\}$ according to the lowest average misclassification rate (MR). 

\subsection{Wholesale customers}
\label{whole}
The first real data refers to clients of a wholesale distributor. When dealing with marketing applications, the target groups can themselves be composed of different segments of the population, which usually present different spending habits. Thus, it could be more efficient to introduce more flexibility when classifying these units through the use of depth
 functions, allowing for a more local focus.  There are a total of 440 observations and 7 variables. The first two are categorical variables. Of these two, we are interested in the variable Channel, which can be either  Horeca (Hotel/Restaurant/Cafè) or Retail channel (Nominal), and will define our two classes for this problem.
The last 5 variables have information about the annual spending in monetary units on diverse product categories: (1) fresh products, (2) milk products
and (3), (4) frozen products, (5) detergents and paper products and (6)
delicatessen products. 
We treat this data as compositional data, exploiting the square-root transformation so that the points lie on a unit hypersphere. As apparent from Fig. \ref{fig:whole}, here the repeated cross-validation will select $\beta=0.05$, with an average MR of $0.15$, and shows an increasing trend in the average missclassification rate starting from $\beta=0.05$ to $\beta=1$. In this example, the local approach has brought an average improvement over the global approach of about $4.5$ percentage points.

\begin{figure}[h]
    \centering
    \includegraphics[width=0.7\linewidth]{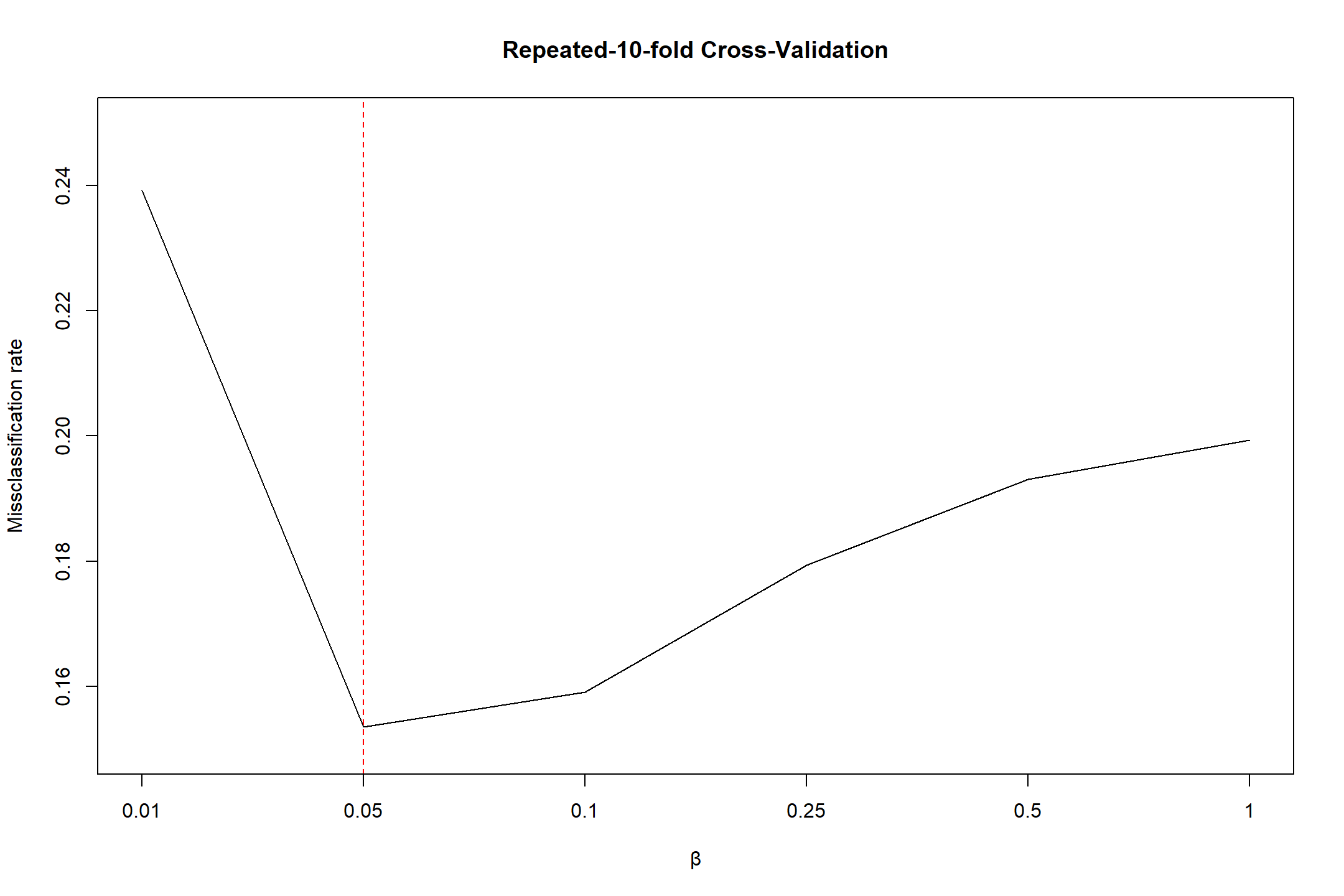}
    \caption{Repeated 10-fold cross-validation results for the Wholesales dataset. On the x--axis there are the $\beta $s, on the y--axis the cross--validated MR, and the dotted red line highlights the value of $\beta$ achieving the minimum MR.}
    \label{fig:whole}
\end{figure}

\subsection{SPAM database}
\label{sec:spam}

 The second dataset we propose classifies 4601 emails as spam or non-spam. What interested us about this textual dataset was both the relevance of the application, since spam detection is an important issue, as spam e-mails can range from simply annoying to actually dangerous, and the complexity of the data, which is of high dimensionality. In fact, there are a total of 57 variables, of which we select the last one, which contains information about the classification, and the first 48, which instead contain information on the percentage of words in the e-mail that match a certain word. Since the percentages of the 48 words chosen to classify the email do not sum up to 1, we added one last variable, which is the complement of 1 of the sum of these percentages. This is done to correctly visualize the point on a hypersphere through the square root transformation, since normalizing the data without it would give a very different interpretation to the results. Treating this as compositional data allows us to overcome the biases driven by the different lengths of the emails, which is the usual strategy in text mining applications \citep{Dhillon2004ConceptDF}.

Again, Fig. \ref{fig:spam} paints a clear picture of the choice of $\beta$. Quite interestingly, something that did not occur during our simulations, the best performing $\beta$ is 0.01, with an average MR of $0.12$, possibly due to the presence of a higher number of total observations. The figure also shows a trend that increases steadily up to $\beta=0.5$, and an average difference between the chosen local and the global depth of $8$ percentage points.

\begin{figure}[h]
    \centering
    \includegraphics[width=0.7\linewidth]{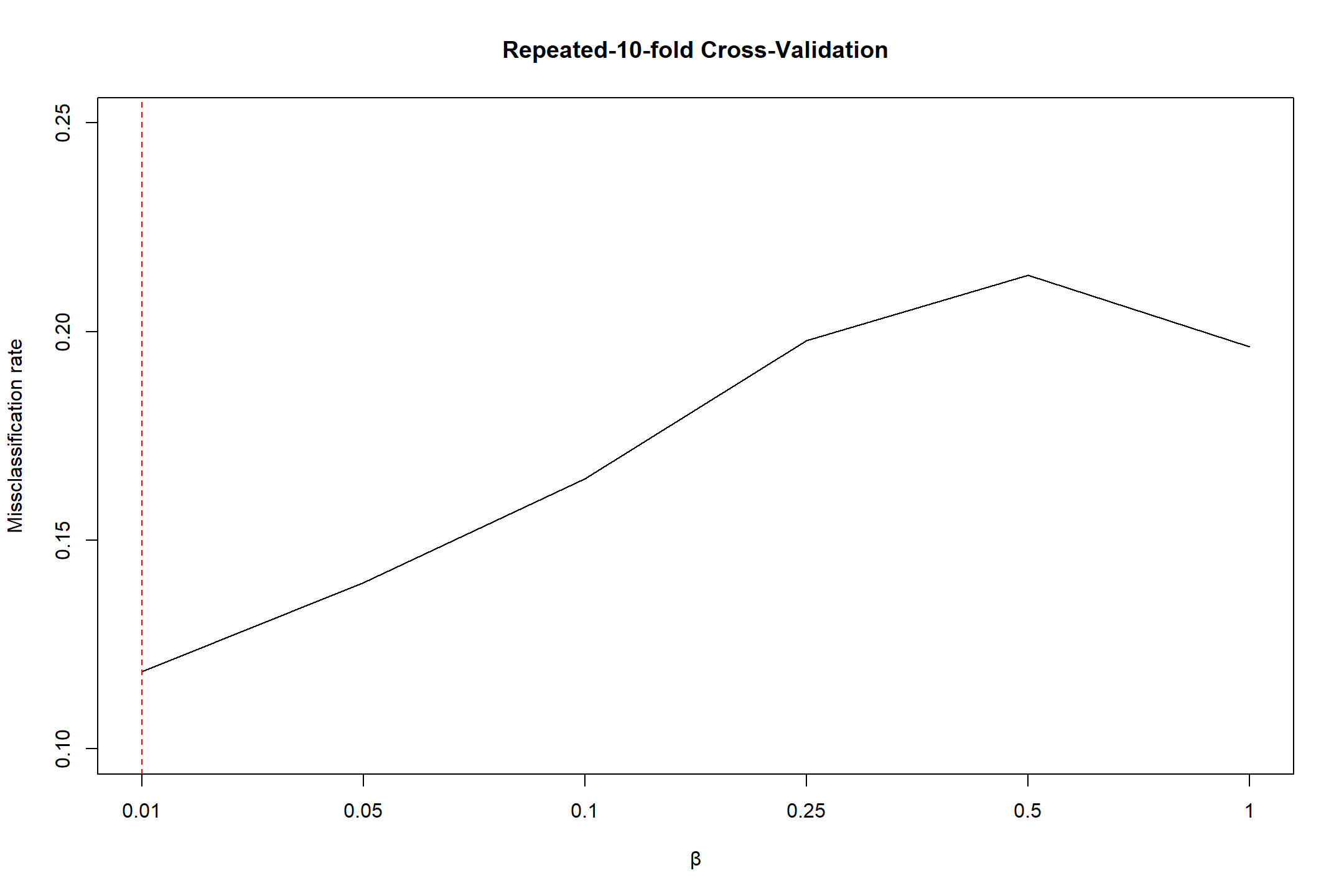}
    \caption{Repeated 10-fold cross-validation results for the Spam dataset. On the x--axis there are the $\beta $s, on the y--axis the cross--validated MR, and the dotted red line highlights the value of $\beta$ achieving the minimum MR.}
    \label{fig:spam}
\end{figure}

\section{Conclusion}
\label{concl}

This paper introduces the local cosine distance depth (LCDD) for directional data on the hypersphere. LCDD extends the cosine distance depth to capture local centrality in multimodal distributions using a neighbourhood approach. When applied to the Depth vs. Depth (DD) classifier, LCDD enables better separation, even for non-convex class structures.
The proposed LCDD-based classifier is compared with its global counterpart in an extensive simulation study. The results demonstrate the effectiveness of the local approach, regardless of the chosen $\beta$ level, except in cases of high noise and dimensions, where the global and local depth functions produce similar results.
These results are further confirmed by two real-data examples. 
Future research will focus on extending the approach to multiclass settings and different manifolds.

\bibliography{ref}

\end{document}